\newtheorem{theorem}{Theorem}[section]
\newtheorem{lemma}[theorem]{Lemma}
\newtheorem{remark}[theorem]{Remark}
\newcommand{\easysum}[2]{\ensuremath{\underset{#1}{\overset{#2}{\sum}}}}
\newcommand*\colvec[1]{
        \global\colveccount#1
        \begin{pmatrix}
        \colvecnext
}
\newcommand{\R}{\mathbb R}
\newcommand{\diam}{\mathrm {diam}}
\def\colvecnext#1{
        #1
        \global\advance\colveccount-1
        \ifnum\colveccount>0
                \\
                \expandafter\colvecnext
        \else
                \end{pmatrix}
        \fi
}
\DeclareMathOperator*{\argmin}{arg\,min}
\begin{document}

 \author{Florian Heinemann\thanks{Institute for Mathematical Stochastics, University of G\"ottingen, Goldschmidtstra{\ss}e 7, 37077 G\"ottingen} \and Axel Munk \footnotemark[1] \thanks{Max Planck Institute for Biophysical
		Chemistry, Am Fa{\ss}berg 11, 37077 G\"ottingen}\and Yoav Zemel \thanks{Centre for Mathematical Sciences, University of Cambridge, Cambridge CB3 0WB} }
\title{Randomised Wasserstein Barycenter Computation: Resampling with Statistical Guarantees}
\date\today
\maketitle

\begin{abstract}
We propose a hybrid resampling method to approximate finitely supported Wasserstein barycenters on large-scale datasets, which can be combined with any exact solver. Nonasymptotic bounds on the expected error of the objective value as well as the barycenters themselves allow to calibrate computational cost and statistical accuracy.  The rate of these upper bounds is shown to be optimal and independent of the underlying dimension, which appears only in the constants.  Using a simple modification of the subgradient descent algorithm of Cuturi and Doucet, we showcase the applicability of our method on a myriad of simulated datasets, as well as a real-data example from cell microscopy which are out of reach for state of the art algorithms for computing Wasserstein barycenters.
\end{abstract}

\section{Introduction}
Recently, optimal transport (OT), and more specifically the Kantorovich (also known as Wasserstein) distance, have achieved renewed interested as they have been recognised as attractive tools in data analysis. Despite its conceptual appeal in many applications (e.g., Rubner et al.\ \cite{rubner2000earth}; Evans and Matsen \cite{evans2012phylogenetic}; Klatt et al.\ \cite{klatt2020empirical}), optimal transport-based data analysis has been triggered on the one hand by recent computational progress (see e.g., Peyr\'e and Cuturi \cite{peyre2019computational}, Schmitzer \cite{schmitzer2019stabilized}, Solomon et al.\ \cite{solomon2015convolutional}, Altschuler et al.\ \cite{altschuler2017near}) and on the other hand by a refined understanding of its statistical properties when estimated from data (e.g., del Barrio et al.\ \cite{del1999central}; Sommerfeld et al.\ \cite{sommerfeld2018inference}; Weed and Bach \cite{weed2019sharp}). This also lead to an increasing interest in Fr\'echet means, or barycenters, with respect to that distance. Since their introduction in the landmark paper of Agueh and Carlier \cite{aguehBarycentersWassersteinSpace2011}, interest in the so-called Wasserstein barycenters sparked.  Among the plethora of their potential applications, one can name unsupervised dictionary learning (Schmitz et al.\ \cite{schmitz2018wasserstein}), distributional clustering (Ye et al.\ \cite{ye2017fast}), Wasserstein principal component analysis (Seguy and Cuturi \cite{seguy2015principal}), neuroimaging (Gramfort et al.\ \cite{gramfort2015fast}) and computer vision (Rabin et al.\ \cite{rabin2011wasserstein}; Solomon et al.\ \cite{solomon2015convolutional}; Bonneel et al.\ \cite{bonneel2016wasserstein}).  An appealing feature of Wasserstein barycenters is their ability to successfully capture geometric properties of complex data objects, thus allowing to define a meaningful notion of average for such objects.  

Figure \ref{FigBaryEllipses} illustrates this by displaying the, numerically computed, Wasserstein barycenter of a collection of ellipses (associated with uniform probability measures on their surface) along with their pixel-wise mean.  It is seen that the former is a single ``average" ellipse, whilst the latter gives a blurry object that is not representative of a ``typical realisation" from this dataset. In fact, in this case the collection of probability measures forms a location-scatter family, and it is known (\'Alvarez et al.\ \cite{alvarez2016fixed}) that the barycenter must again lie in the same family.  Hence, the OT-barycenter of these ellipses is again an ellipse.
\begin{figure}
\includegraphics[width=\textwidth]{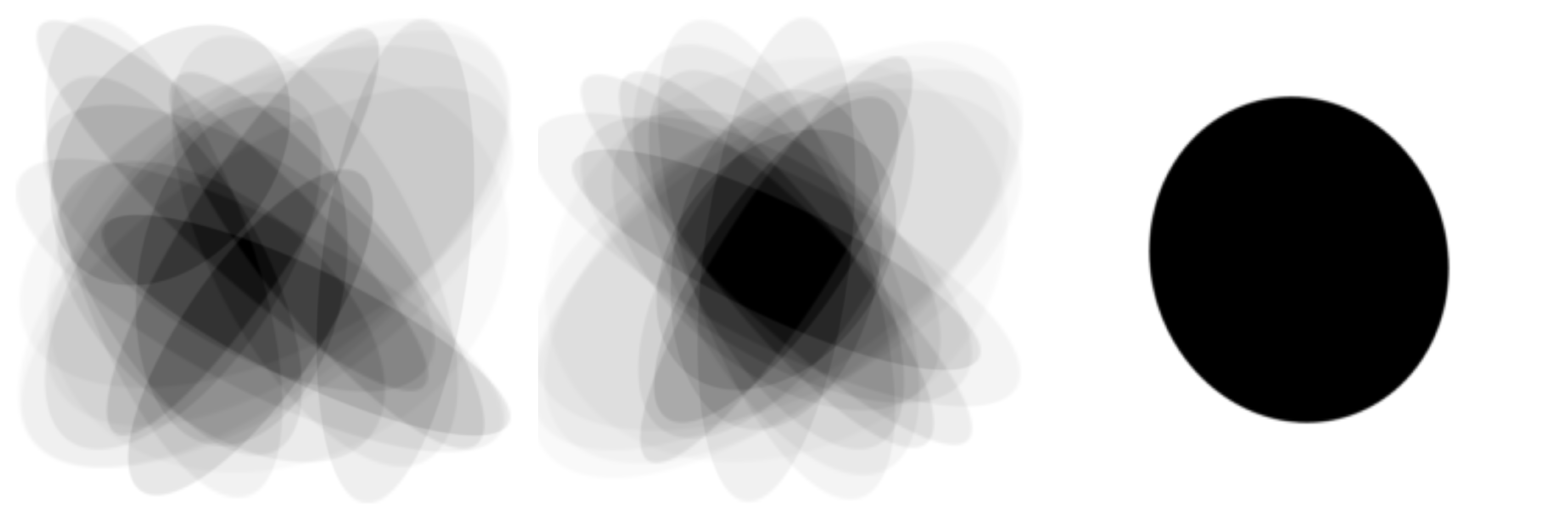}
\caption{From left to right: Mean, Mean after recentering and Wasserstein barycenter of 20 randomly generated ellipses in $\mathbb R^2$.}
\label{FigBaryEllipses}
\end{figure}
\subsection{The OT-Barycenter Problem}\label{sec:foundations}
One of the most fundamental questions in statistics and data analysis is inferring the mean of a random quantity on the basis of realisations $x_1,\dots,x_N$ thereof.  Whilst this is conceptually straightforward when the data lie in a Euclidean space or a Hilbert space, many datasets exhibit complex geometries that are far from Euclidean (e.g., Billera et al.\ \cite{billera2001geometry}; Dryden et al.\ \cite{dryden2009non}; Bronstein et al.\ \cite{bronstein2017geometric}), catalysing the emergence of the field of non-Euclidean statistics (e.g., Patrangenaru and Ellingson \cite{patrangenaru2015nonparametric}; Huckemann and Eltzner \cite{huckemann2021data}; Dryden and Marron \cite{dryden2021object}).  Utilising the fact that the mean of points $x_1,\dots,x_N\in\mathbb R^D$ is characterised as the unique minimiser of $x\mapsto \sum_{i=1}^N\|x_i-x\|^2$, the notion of a barycenter (or Fr\'echet mean; Fr\'echet \cite{frechet1948elements}; Huckemann et al.\ \cite{huckemann2010intrinsic}) extends this to the non-Euclidean case by replacing the norm $\|x_i-x\|$ with an arbitrary distance function.  Motivated from the preceding paragraphs, our work focusses on OT-barycenters, where the distance defining the barycenter is an optimal transport distance.  More specifically, in this paper we are concerned with OT-barycenters of finitely supported probability measures i.e., collections of weighted points. To set up notation, consider a finite metric space $(\mathcal{X},d)$ and $N$ measures of the form
\begin{align*}
\mu_i=\easysum{k=1}{M_i} b_k^i \delta_{x_k^i},
\quad x_k^i\in \mathcal X,\quad b_k^i\ge0,
\quad \sum_{k=1}^{M_i}b_k^i=1,
\qquad i=1,\dots,N,
\end{align*}
where $\delta_x$ is a Dirac measure at $x\in\mathcal X$. In imaging applications, the $x_k^i=x^i$'s could be points on a regular grid in $[0,1]^2$ with the weights $b_k^i$ being the greyscale intensity of image $k$ at pixel $x^i$.  The above, more general formulation allows to also treat irregular points clouds that are unrelated to each other. In this setting the $p$-Wasserstein distance (Kantorovich \cite{kantorovich1942translocation}; Vaserstein \cite{vaserstein1969markov}) between any two of these measures, $\mu_i$ and $\mu_j$, is
\begin{align*}
W_p(\mu_i,\mu_j)=
\bigg(\underset{\pi \in \Pi (\mu_i , \mu_j)}{\min} \quad &\easysum{k=1}{M_i}\easysum{l=1}{M_j}\pi_{kl}d(x^i_k,x^j_l)^p
\bigg)^{1/p},
\end{align*}
where $p\ge1$ and the set of \textbf{couplings} between $\mu_i$ and $\mu_j$ contains all the joint distributions on $\mathcal X^2$ having $\mu_i$ and $\mu_j$ as marginal distributions, and is given by
\begin{align*}
\Pi (\mu_i , \mu_j):=
\left\{ \pi \in \mathbb{R}^{M_i \times M_j} \vert \pi \mathbf{1}_{M_i} = b_i ,  \mathbf{1}_{M_j}^T \pi=b_j^T\right\}
,\qquad \mathbf{1}_M = (1,1,\dots,1)\in \R^M.
\end{align*}
In particular, it can be shown that $W_p$ is a distance on the space of probability measures on $\mathcal{X}$ (see e.g., Villani \cite[Chapter 6]{villaniOptimalTransportOld2009}; or Zolotarev \cite{zolotarev1976metric} for an early reference). To define barycenters with respect to this distance we need to choose an ambient space $\mathcal{Y}$ containing $\mathcal{X}$, since any reasonable choice of such a barycenter should be allowed to have mass at positions that may differ from the support points of the $N$ data measures. Let $\mathcal P(\mathcal Y)$ be the space of measures on $\mathcal{Y}$.  The \textbf{$p$-Fr\'echet functional} $F: \mathcal{P}(\mathcal{Y}) \rightarrow \mathbb{R}_+$ associated with $\mu_1,\dots ,\mu_N$ is defined as
\begin{equation}\label{def::frechet}
F^p(\mu)
=\frac{1}{N}\easysum{i=1}{N} W_p^p(\mu_i,\mu).
\end{equation}
\begin{figure}[H]
\begin{center}
  \includegraphics[scale=0.06]{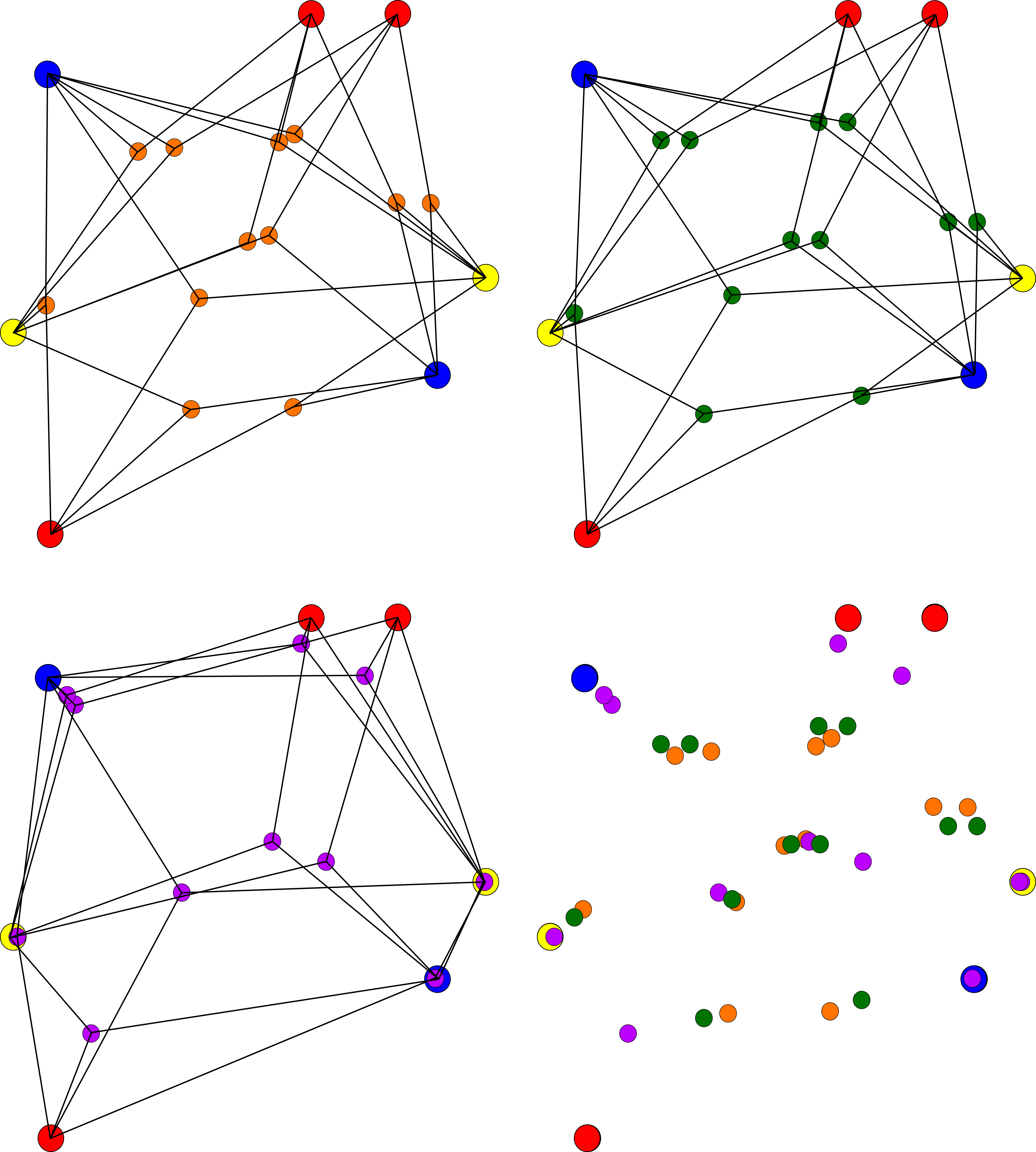}
  \caption{Three different centroid sets for the same set of three measures (red, blue, yellow). The lines connect to the support points of the measures from which a centroid point has been constructed. Here, it holds $\lvert \mathcal{C} \rvert=12$. \textbf{Top left:} Centroids for $p=3$ in orange. \textbf{Top right:} Centroids for $p=2$ in green. \textbf{Bottom left:} Centroids for $p=1$ in violet. \textbf{Bottom right:} All three centroids superimposed onto each other.}
   \label{centroids}
\end{center}
\end{figure}
We call any minimiser of $F^p$ a \textbf{$p$-Fr\'echet mean} or \textbf{$p$-Wasserstein barycenter} of $\mu_1,\dots ,\mu_N$. When $p$ is omitted, it is assumed to be equal to 2.

The starting point of our paper is an observation made by Le Gouic and Loubes \cite[Theorem 8]{le2017existence}, who showed that when the ambient $\mathcal{Y}$ is geodesic (e.g., $\mathcal Y=\R^D$), any $p$-barycenter of $\mu_1,\dots ,\mu_N$ is supported on the $p$-centroid set
\begin{align*}
\mathcal{C}:=\left\{\underset{y \in \mathcal Y }{\argmin} \easysum{i=1}{N}d^p(x_i,y)\Big\vert x_i \in \text{supp}(\mu_i)\right\}
=\left\{\underset{y \in \mathcal Y }{\argmin} \easysum{i=1}{N}d^p(x^i_{k_i},y)\Big\vert k_i\in\{1,\dots,M_i\} \right\}
,
\end{align*}
where, by a slight abuse of notation, $d(\cdot,\cdot)$ denotes a metric on $\mathcal Y$ that extends the original metric $d$ on $\mathcal X$ and supp denotes the support of the measure. Thus, even though $\mathcal Y$ is usually not a finite space, the minimisation can always be carried out on measures supported on a finite set $\mathcal C\subseteq \mathcal Y$.  For this reason, defining $W_p$ for finitely supported measures suffices for the purpose of the present paper. In this setting of finitely supported measures barycenters always exist, but they are not necessarily unique.

The choice of the ambient space $\mathcal Y$ is not unique.  Whilst it can typically assumed to be a Euclidean space $\mathbb R^D$, certain applications might warrant setting it to be a curved space of lower dimension $D'<D$ to better represent the given data. This could for instance apply to data on a sphere, where it might be natural to restrict the barycenter to this set as well.  The upper bounds we provide in Section~\ref{sec:bounds} depend on the dimension of the chosen ambient space $\mathcal Y$.  It is usually the case that our bounds improve with decreasing dimension of the ambient space. Therefore, choosing a curved low-dimensional ambient space may not only better capture the structure of the data, but also improve the statistical guarantees on the approximation error.

We would like to stress that the definition and our later algorithms can be adapted in a straightforward way to Fr\'echet means with non-uniform weights other than $1/N$ in \eqref{def::frechet}. We avoid this generality for brevity and simplicity.
When $p=2$ and $d$ is the Euclidean metric, the argmin is the linear average $\sum x_i/N$ and $\mathcal{C}$ simplifies to
\begin{align*}
\mathcal{C}:=\left\{ \frac{1}{N}\easysum{i=1}{N}x_i\Big\vert x_i \in \text{supp}(\mu_i)\right\}
=\left\{ \frac1N\easysum{i=1}{N}x^i_{k_i}\Big\vert k_i\in\{1,\dots,M_i\} \right\}
.
\end{align*}
As the set $\mathcal C$ plays an important role in the following, it is illustrated in Figure~\ref{centroids} for different powers of the Euclidean metric in $\mathbb R^2$.
In particular, $\mathcal{C}$ is finite with cardinality at most $\prod_{i=1}^NM_i$ and the Fr\'echet functional can be expressed as a sum of sums over at most $\lvert \mathcal{C}\rvert \sum_{i=1}^{N}M_i$ summands. Identifying any candidate barycenter $\mu$ with its weights vector $a \in \mathbb{R}^{\lvert \mathcal{C}\rvert}$ on $\mathcal{C}$, this allows to rewrite the minimisation of the $p$-Fr\'echet functional as a linear program (LP) which computes optimal transport plans $\pi^{(i)}\in \Pi (\mu,\mu_i)$ between $\mu$ and $\mu_i$ for $i=1, \dots ,N$, whilst at the same time also minimising over the weights vector $a$ of $\mu$. The LP can then be denoted as

\begin{equation}
\label{eq:linProgBary}
\begin{aligned}
\underset{\pi^{(1)},\dots,\pi^{(N)},a}{\min} \quad & \frac{1}{N} \easysum{i=1}{N}\easysum{j=1}{\lvert \mathcal{C} \rvert}\easysum{k=1}{M_i}\pi^{(i)}_{jk}c_{jk}^i \\
\text{subject to} \quad &\easysum{k=1}{M_i}\pi^{(i)}_{jk}=a_j \quad \forall \ i=1,\dots ,N, \quad \forall j=1,\dots,\lvert \mathcal{C} \rvert \\
&\easysum{j=1}{\lvert \mathcal{C} \rvert}\pi^{(i)}_{jk}=b^i_k \quad \forall \ i=1,\dots ,N ,\quad \forall k=1,\dots ,M_i \\
&\pi^{(i)}_{jk}\geq 0 \quad \forall i=1,\dots ,N \quad \forall j=1,\dots ,\lvert \mathcal{C}\rvert ,\quad \forall k=1,\dots ,M_i,
\end{aligned}
\end{equation}
where $c_{jk}^i=d^p(\mathcal C_j,x_{k}^i)$ is the $p$-th power of the distance between the $j$-th point of $\mathcal{C}$ and the $k$-th point in the support of $\mu_i$.

Unfortunately, the size of $\mathcal{C}$ typically grows as $\prod M_i$, rendering this linear program intractable for even moderate values of $N$ and $M_i$ (see Figure~\ref{centroids}).  Borgwardt and Patterson \cite{borgwardt2020improved} provide some reformulations and improvements, but ultimately, this approach is currently infeasible for meaningful applications to large-scale data. To illustrate the size of this linear program consider a set of $N=100$ greyscale images of size $M_i=M=256\times256$. We assume the images to be probability measures supported on an equidistant grid in $[0,1]^2$ equipped with the Euclidean distance. Even if we exploit the fact that for $p=2$ the set $\mathcal{C}$ is simply an $N$-times finer grid (see Anderes et al.\ \cite{anderesDiscreteWassersteinBarycenters2016}), this LP still has over $10^{15}$ variables and over $10^{10}$ constraints. If the support points would be in more general positions the LP would have over $10^{488}$ variables and over $10^{483}$ constraints. To put this into perspective, we note that in November 2020 the highest ranking supercomputer in the TOP500 list (for details on this list see Dongarra et al.\ \cite{dongarra2003linpack}) had about $10^{15}$ bytes of RAM. 

\subsection{OT-Barycenter Computation}
To overcome this computational obstacle and to utilise the impressive geometrical power of the OT-barycenter (recall Figure~\ref{FigBaryEllipses}), there has been a great effort in constructing algorithms that yield approximations of the OT-barycenter whilst reducing the computational complexity of the exact LP-formulation by several orders of magnitude. In fact, it is known that already for three measures in two dimensions finding an exact Wasserstein barycenter is NP-hard, in general (Borgwardt and Patterson \cite{borgwardt2019computational}).

Cuturi and Doucet \cite{cuturiFastComputationWasserstein2014} proposed a subgradient descent method to compute the best approximation of the Wasserstein barycenter, which is supported on a prespecified support set. If we choose this set to be equal to $\mathcal{C}$, then this algorithm approximates a true barycenter of the measures. However, as we discussed previously, the set $\mathcal{C}$ is far too large for efficient computations (see Figure~\ref{centroids} for a small scale example). Its size, which controls the number of variables in the linear program \eqref{eq:linProgBary}, can be as large as the support size of the product measure $\mu_1 \otimes \dots \otimes \mu_N$ of the $N$ data measures.  Cuturi and Doucet also introduced an alternating, Lloyd-type procedure, that switches between optimising the weights of the barycenter measure on a specific support set and updating the support set according to the new weights. Since there always exists a barycenter with support size bounded above by $\sum_{i=1}^{N}\lvert M_i\rvert -N+1$ (Anderes et al.\ \cite{anderesDiscreteWassersteinBarycenters2016}), we can simply choose a support set of this size to approximate the true barycenter. This option to compute an exact barycenter without using the set $\mathcal{C}$ comes at a tangible cost, however. This alternating procedure is highly runtime extensive, since it involves solving the fixed-support barycenter problem in each step. Still, it is vastly superior to the direct LP-approach. However, this alternating procedure yields a non-convex minimisation problem, which is prone to converge to local minima instead of global ones.

The computational burden can be alleviated by means of regularisation, most commonly in the form of an additive entropy penalty, which leads to the so-called Sinkhorn distance (Cuturi \cite{cuturi2013sinkhorn}).  Cuturi and Doucet \cite{cuturiFastComputationWasserstein2014} compute the barycenter with respect to  the Sinkhorn distance by means of subgradient descent.   Benamou et al.\ \cite{benamou2015iterative} exploit the relation of entropy to the Kullback--Leibler divergence and solve this regularised problem by iterative Bregman projections. These approaches reduce the runtime by orders of magnitude, and approximate a regularised surrogate for the exact Wasserstein barycenter. Unfortunately, (entropic) regularisation is not without its drawbacks, particularly the choice of the regularisation parameter is a delicate issue. Although the regularised solution converges towards the real solution with maximal entropy as the regularisation parameter vanishes (\cite{benamou2015iterative}), computations become costly and numerically unstable for small parameters (Altschuler et al.\ \cite{altschuler2017near}; Dvurechensky et al.\ \cite{dvurechensky2018computational}; Klatt et al.\ \cite{klatt2020empirical}), which reflects a "no free lunch" scenario. Workarounds like log-stabilisations (Schmitzer \cite{schmitzer2019stabilized}) exist, but do not completely address this problem, as with increasingly smaller regularisation there can still be numerical instabilities, and the stabilisation sacrifices a significant portion of the computational gain resulting from the regularisation.  Moreover, whilst runtime and memory requirements of the regularised methods scale linearly in the number of measures, the scaling in the support sizes of these measures is still problematic.  There have been multiple iterations of improvements on the naive Sinkhorn algorithm (e.g.,  \cite{altschuler2017near,dvurechensky2018computational}), but the approximation of Wasserstein barycenters already for relatively small ensembles of medium sized data objects, say $100$ images with around $10^5$ pixels each, becomes nevertheless infeasible in terms of computation time and required memory.

Recently, Xie et al.\ \cite{xie2020fast} have proposed an inexact proximal point method which allows to solve the fixed-support OT-barycenter problem at a similar complexity as the regularised problem, whilst avoiding to introduce a regularisation term. Their simulations also suggest that they achieve sharper images as barycenters compared to the standard entropy-regularised methods. However, their method has an increased memory demand and is still not applicable to large scale data as the scaling in the support size is still identical to the Sinkhorn algorithm.  For further recent contributions, see e.g., Tiapkin et al.\ \cite{tiapkin2020stochastic}; Ge et al.\ \cite{ge2019interior}; Dvurechenskii et al.\ \cite{dvurechenskii2018decentralize}; Lin et al.\ \cite{lin2020computational}; Li et al.\ \cite{li2020continuous}.

To summarise the above, the computational aspect of the OT-barycenter problem has attracted significant interest over the last years and an ever growing and improving toolbox of methods is available to tackle this problem. However, the size of problems which can be solved is still rather limited; modern applications e.g.\ in medical high-resolution imaging, are currently out of reach. Additionally, most methods consider the fixed-support OT-barycenter problem, where all measures are assumed to be supported on the same finite set. These methods cannot exploit sparsity in the support of the measures if it is present. In particular, if the measures have vastly different supports, fixed support methods are likely to be inefficient.

\subsection{Our Approach}
In this work we are primarily concerned with measures having potentially very different supports (as in the synthetic example in Figure~\ref{centroids}). We decrease the problem size by generating random sparse approximations to the data measures.  This will have a very limited effect on the runtime of fixed-support methods, but, for instance, the original, alternating subgradient descent by Cuturi and Doucet can fully exploit the advantages of this reduced problem size (see Section~\ref{sec:computation}). In the same spirit any other method which takes advantage of sparse support aligns well with our resampling method. We stress that the proposed resampling method to compute random approximations of Wasserstein barycenters does not rely on any type of regularisation.

Our approach extends work by Sommerfeld et al.\ \cite{sommerfeld2019optimal} on randomised optimal transport computation to the barycenter problem by replacing the original measures by their empirical counterparts, obtained from $S$ independent random samples. Then, the true barycenter $\mu^*$ is estimated by the barycenter of the empirical measures $\widehat \mu_S^*$. We provide nonasymptotic $L_1$-type bounds on the objective values of $\mu^*$ and $\widehat\mu^*_S$, as well as on the distance between the set of empirical barycenters and the true ones.  These bounds are optimal in terms of the dependence on the resampling size $S$. As an example, we will see that for $N$ measures with $M$ support points in $[0,1]^D$, it holds that
\begin{align}\label{eq:p2}
\mathbb{E}[\lvert F^2(\mu^*)-F^2(\widehat \mu_S^*) \rvert ] \leq 4D^{3/2}S^{-\frac12}\begin{cases}
2+\sqrt 2 & D=1\\
2+\log_2(M) & D=2\\
(3+\sqrt 2)M^{\frac12-\frac1D}  & D\ge3 
\end{cases}
\end{align}
where $F^2$ corresponds to $p=2$ in \eqref{def::frechet}, i.e., the two barycenters are compared with respect to the $W_2^2$ distance.  Two key aspects of our upper bound appear striking to us. First, the bound is uniform in the number of measures $N$, which has important consequences for randomised computations on a large number of datasets; and second, the convergence rate is $S^{-\frac{1}{2}}$, independently of the dimension $D$ of the ambient space $\mathcal{Y}$ in which the measures reside (see \cite{sommerfeld2019optimal} for $N=2$). This is in contrast to the case of absolutely continuous measures, where this rate is typically achieved when $D<2p$, whereas if $D>2p$, convergence holds at the slower rate $S^{-\frac{p}{D}}$ (see e.g., Dudley \cite{dudley1969speed};  Dereich et al.\ \cite{dereich2013constructive}; Lei \cite{lei2020convergence} and references therein), exhibiting a curse of dimensionality. In Figure~\ref{3d_bary} we showcase two examples of sampling approximations of barycenters in three dimensions, where we obtain visually striking approximations of a barycenter whilst reducing the support size of the measures by an order of magnitude.  The bound \eqref{eq:p2} is particular case of our general results as stated in Theorem~\ref{full_frechetbound} below.  It is important to stress that the upper bound depends on the intrinsic dimension of $\mathcal X$ and not on that of the ambient space $\mathcal Y$.  If, for example, $\mathcal X$ can be embedded in a set that is a Lipschitz image of a lower dimensional cube $[0,1]^{D'}$ with $D'<D$, then \eqref{eq:p2} still holds with $D$ replaced by $D'$. This is analogous to a similar well-known phenomenon in manifold learning (e.g., Genovese et al.\ \cite{genovese2012minimax}).

\begin{figure}
    \centering
    \includegraphics[width=0.85\textwidth]{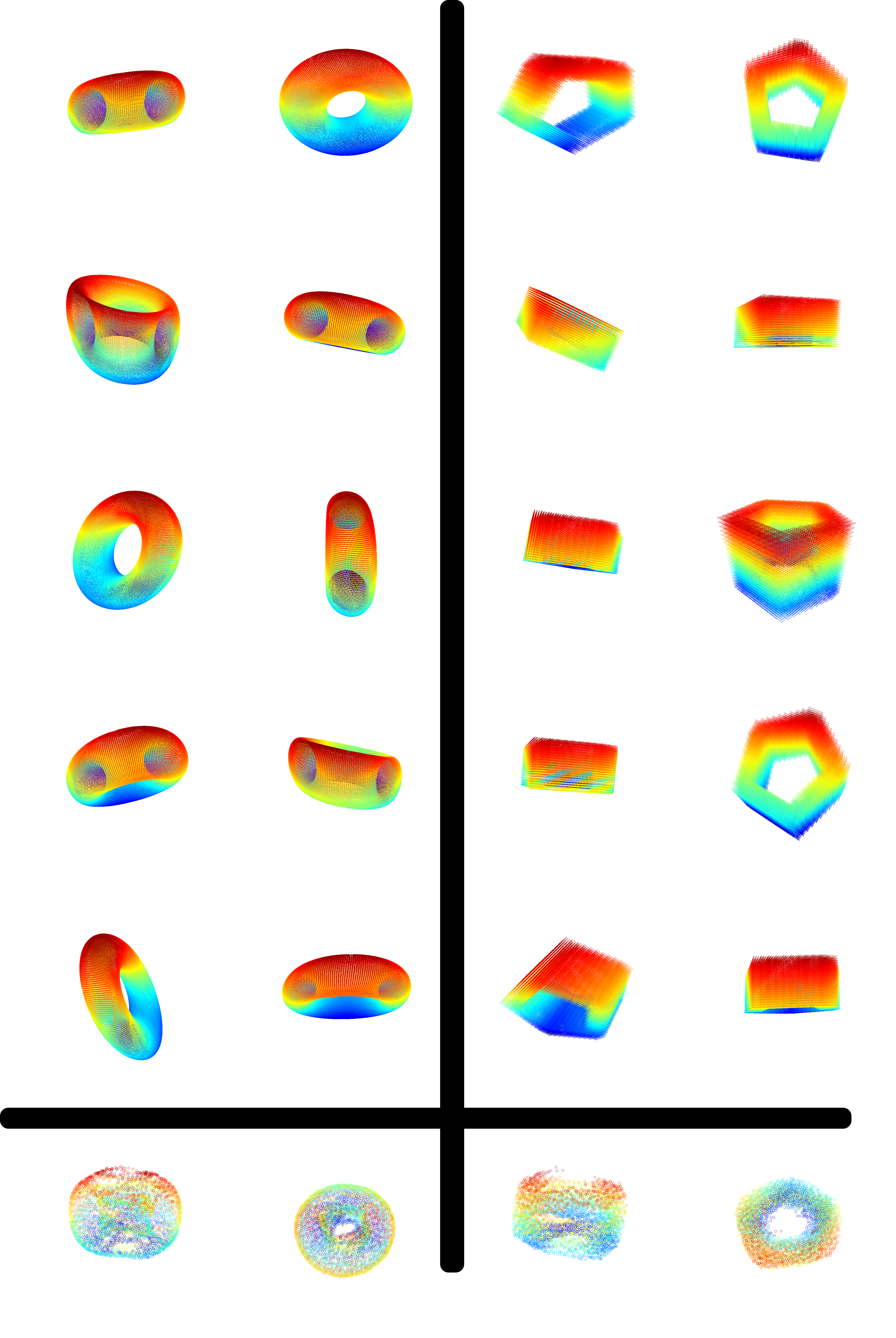}
       \caption{\textbf{Upper Left:} Projections of $N=10$ torsos in $\mathbb{R}^3$, discretised on about $M=62500$ points. \textbf{Lower Left:} Projected stochastic barycenter approximation with $S=4000$ from the same angle (left) and from an adjusted angle (right). \textbf{Upper Right:} Projections of $N=10$ pentagonal prisms in $\mathbb{R}^3$, discretised on about $M=85000$ points. \textbf{Lower Right:} Projected stochastic barycenter approximation with $S=4000$ from the same angle (left) and from an adjusted angle (right).}
     \label{3d_bary}
\end{figure}
To the best of our knowledge, the best complexity bounds on the computation of the optimal transport for two measures are of order $\tilde{\mathcal{O}}(M^{\frac{5}{2}})$ (Lee and Sidford \cite{lee2014path}), i.e., this resampling scheme reduces the runtime by a factor of order $\tilde{\mathcal{O}}(\left(\frac{M}{S}\right)^{\frac{5}{2}})$.  For instance, by resampling $15\%$ of the data points, the runtime is reduced by a factor of more than 100. For $N=2$ Sommerfeld et al.\ \cite{sommerfeld2019optimal} report empirical relative errors of around $5\%$ for $S=2000$ on $128\times 128$ images, which translates to $S\approx 0.12M$. These results are striking and provided the motivation to extend this method to the barycenter problem for general $N$, which suffers even more than the optimal transport problem from prohibitive computational cost.

Furthermore, in the case $D=2$, the upper bound depends only logarithmically on the support size $M$. Thus, in the (ever important) case of two dimensional images, we have only logarithmic dependence on the resolution of the image. This suggests a good performance of our sampling method on high-resolution image-datasets. We explore this empirically in Section~\ref{sec:computation}.

Finally, to perform our simulations, we leverage some empirical observations on the barycenters of uniform, finitely supported measures to modify an existing iterative method for our needs in order to obtain a stochastic method that excels at solving the barycenter on large, sparsely supported datasets, which we in the following refer to as \emph{Stochastic-Uniform-Approximation (SUA)--method}. We explore the quality of our bounds empirically in large-scale simulation studies and test the visual performance of our method on artificial and real data.

\bigskip

\textbf{Population Barycenters}.
Let us stress that in this paper, the collection of measures $\mu_1,\dots,\mu_N$ is viewed as fixed.  A different but related problem is that of estimating a population barycenter when the measures $\mu_i$ are realisations of a random probability measure $\bm\mu$ in $\mathcal P(\mathcal Y)$ (i.e., the distribution of $\bm\mu$ is an element of $\mathcal P(\mathcal P(\mathcal Y)))$; see e.g., Pass \cite{pass2013optimal}; Bigot and Klein \cite{bigot2018characterization}.  The minimiser of \eqref{def::frechet} is then an empirical barycenter, whose asymptotic properties (as $N\to\infty$) are studied in the form of consistency (Le Gouic and Loubes \cite{le2017existence}), rates of convergence (Ahidar-Coutrix et al.\ \cite{ahidar2020convergence}; Le Gouic et al.\ \cite{gouic2019fast}) and, in very specific cases, central limit theorems (Panaretos and Zemel \cite{panaretos2016amplitude}; Agueh and Carlier \cite{agueh2017vers}; Kroshnin et al.\ \cite{kroshnin2019statistical}). From a computational perspective, (stochastic) gradient descent-type methods have been shown to converge in some situations (Zemel and Panaretos \cite{zemel2019frechet}; Backhoff-Varaguas et al.\ \cite{backhoff2018bayesian}; Chewi et al.\ \cite{chewi2020gradient}).  Most of these papers focus on absolutely continuous measures that are fully observed; closer to our context is the recent work of Dvinskikh \cite{dvinskikh2020sa}, where the measures are discrete and the regularised and unregularised settings are both discussed.  In summary, the stochasticity in the two problems is of a different nature: In the population barycenter context, it arises in $\mathcal P(\mathcal Y)$, i.e., at the level of the measures; whereas in our setting it takes place on the space $\mathcal Y$ itself, i.e., at the level of each individual measure $\mu_i$ via the resampling.

\bigskip

\textbf{Deterministic Approaches.}  Instead of generating \emph{random} smaller-sized problem, one could approximate the data measures $\mu_i$ by some \emph{deterministic}  approximations supported on a small number of points. One immediate advantage of our random approach is that our error bounds readily apply in case the data measures are random realisations obtained from some experiment (also compare the previous paragraph on population barycenters).   Our results in Section~\ref{sec:bounds} then allow to control the error between the barycenter of the observed data and the barycenter corresponding to the underlying mechanism that generated the data measures.

Though, even if the $\mu_i$'s are considered to be fixed, the sampling approach has significant computational advantages, since generating the empirical measures $\mu_i^S$ can be done in negligible time.  In contrast, a deterministic approximation inevitably involves some sort of optimisation problem.  The related problem of finding the \emph{quantiser}, i.e., the best $S$-supported measure approximating (in $W_p$) a given measure $\mu$, which is extremely difficult even in one dimension (Graf \& Luschgy \cite{graf2007foundations}), provides a serious challenge that to some extent inspired the study of the rate of convergence of the empirical measure (Dereich et al.\ \cite{dereich2013constructive}).  Furthermore, the quantisers will typically not be uniform on their $S$ support points, so computing the barycenter of the quantisers is more complicated than computing the barycenter of empirical measures, for which we can deploy a faster approach due to their uniform weights (see Section~\ref{sec:computation}).

One may also consider the uniform quantisers $\mu^S_{i,unif}$, the best approximation that is uniform on $S$ points (Chevallier \cite{chevallier2018uniform}).  These deterministic approximations converge, in the worst case, at rate $\log S/S$ (faster than $1/\sqrt S$ of $\mu^S_i$), and, being uniform, allow for quick computation of the approximate barycenter (see Section~\ref{sec:computation}). Notably, if we replace the empirical measures by uniform quantisers and invoke \cite[Theorem~3.3]{chevallier2018uniform}, then we can replicate the proof of Theorem~\ref{full_frechetbound} and also obtain a worst case rate of $\log S/S$ for the error of the barycenter of the uniform quantisers in the Fr\'echet functional.

Whilst these stronger bounds are theoretically appealing, there are two significant drawbacks arising from this approach. Firstly, there is little theoretical control on the support of $\mu^S_{i,unif}$; as a consequence, the barycenter of the $N$ uniform quantisers will usually not be supported on the centroid set $\mathcal C$, in which any true barycenter must lie.  Secondly and more importantly, like quantisers with unrestricted weights, finding $\mu^S_{i,unif}$ or a reasonable surrogate thereof is computationally intractable unless $M$ is small. 

Reduction of the support size is not unlike the notion of multi-scale methods for optimal transport (see for instance Gerber and Maggioni \cite{gerber2017multiscale}, Merigot \cite{merigot2011multiscale}, Oberman and Yuanlong \cite{oberman2015efficient}). These start by computing a coarse (i.e., with small support size) approximation of the measures and solving the optimal transport between the coarse versions.  One then uses the resulting solution as a good initial point to the optimal transport between finer approximations, and the procedure is iterated until the full-scale problem is solved.  Thus, these methods speed up the computations substantially by finding a good initial point.  Unfortunately, they will ultimately fail in large-scale problems for which even having a good initial point is insufficient.  One may argue that in such circumstances one should stop the multi-scale approach on a smaller-scale version of the problem (for instance by merging adjacent grid points). This will inevitably cause a loss of information by blurring the data. In contrast, whilst our sampling approach also reduces the size of the problem, we still have a chance to observe any small feature of the data and if we take a sufficient amount of repeats, we will do that without ever having to move to the full-scale problems.

A potentially more suitable comparison might be given by the ``shielding-neighbourhood" approach of Schmitzer \cite{schmitzer2016sparse}.  It allows to solve dense, large-scale problems by solving a sequence of smaller, sparse problems instead. In particular, one can obtain an exact optimal solution of the full-scale problem without the need to solve it directly.  Shielding performs particularly well when the measures have regular supports, such as in the case of images.  A challenge in applying this method is that without strong a-priori assumptions, it is not clear what size exactly the small problems should have, and their construction could be costly, potentially hindering the computational benefits.  In problems having a less regular structure, this method is less efficient.  One might then be willing to compute an inexact, approximate, solution, but it is difficult to tune the size of the smaller problems to match the desired computational effort.  In contrast, our sampling approach allow for exact control of the size of the reduced problems, which enables precise control of the trade-off between desired computational effort and needed accuracy in the results.

\subsection{Outline}
The paper is stuctured as follows. Section~\ref{sec:rot} introduces the randomised optimal transport computation and gives a refined version of the results of \cite{sommerfeld2019optimal} for $N=2$. In Section~\ref{sec:bounds} we extend this to the case $N>2$ and show theoretical guarantees on the quality of the stochastic approximation. Finally, Section~\ref{sec:computation} presents numerical results on simulated and real data, and gives more details on the SUA-algorithm. Some proofs are omitted from the main text, and are developed in an Appendix. An implementation of SUA and a selection of the aforementioned algorithms is available as part of the CRAN R package \verb+WSGeometry+.

\section{Randomised Optimal Transport}\label{sec:rot}
Our approach will be based on reducing the size of the problem by choosing random approximations to the data measures.  To this end, we draw an independent sample $X_1,\dots,X_S$ from each $\mu_i$ and use the \emph{empirical measure}
\[
\mu^S_i(x):=\frac{\# \{k \vert X_k=x\}}{S},\quad \ x \in \mathcal{X}
\]
as a proxy for $\mu_i$. We begin with a discussion for $N=1$ and provide a refined version of Theorem $1$ in~\cite{sommerfeld2019optimal}, which we require in the following.

\begin{theorem}\label{emp_upper}
Let $\mu$ be a measure on a finite space $\mathcal{X}=\{ x_1,\dots ,x_M\}$ endowed with a metric $d$, and let $\mu^S$ be the corresponding empirical measure obtained from a sample of size $S$ from $\mu$.
Then
\begin{align*}
\mathbb{E}[W_p^p(\mu^S ,\mu)]\leq \frac{\diam(\mathcal{X})^p \mathcal{E}}{\sqrt{S}},
\end{align*}
where the constant $\mathcal{E}:=\mathcal{E}(\mathcal{X},p)$ is given by
\begin{align*}
\mathcal{E}:= 
2^{p-1}\underset{q>1,l_{\max}\in \mathbb{N}_0}{\inf} q^p \Bigg[ q^{-(l_{\max}+1)p}M^{\frac{1}{2}} + \left( \frac{q}{q-1} \right)^p \easysum{l=1}{l_{\max}} q^{-lp} \sqrt{\mathcal{N}(\mathcal{X},q^{-l}\diam(\mathcal{X}))}  \Bigg].
\end{align*}
Here $\mathcal{N}(\mathcal{X},\delta)$ denotes the $\delta$-covering number of $\mathcal{X}$, and $\diam(\mathcal X)=\sup_{x,y\in\mathcal X}d(x,y)$.  Moreover, if $p=1$ the factor $\left( \frac{q}{q-1} \right)^p$ can be removed.
When $(\mathcal X,d)\subset (\R^D,\|\cdot\|_2)$, we have for all integers $q\ge2$ that
\begin{align*}
\mathbb{E}\left[W_p^p(\mu,\mu^S)\right] \leq S^{-\frac{1}{2}}D^{\frac{p}{2}} 2^{p-1} \diam(\mathcal{X})^pq^p 	\left\{
		\begin{array}{lll}
			\left( \frac{q}{q-1} \right)^p\frac{q^{p^\prime}}{1-q^{p^\prime}} & \mbox{if } p^{\prime}<0 \\
			 1+\left( \frac{q}{q-1} \right)^pD^{-1}\log_q M& \mbox{if } p^{\prime}=0 \\
			 M^{\frac{1}{2}-\frac{p}{D}}+\left( \frac{q}{q-1} \right)^pq^{p^\prime}\frac{ M^{\frac{1}{2}-\frac{p}{D}}}{q^{p^\prime}-1}& \mbox{if } p^{\prime}>0,
		\end{array}
	\right.
\end{align*}
where $p'=D/2-p$ and the factor $\left( \frac{q}{q-1} \right)^p$ can be omitted if $p=1$.  In particular, using $q=2$ and $p=1$ gives
\begin{align*}
\mathbb{E}\left[W_1(\mu,\mu^S)\right] \leq S^{-\frac{1}{2}}2D^{\frac{1}{2}}\diam(\mathcal{X}) 	\left\{
		\begin{array}{lll}
			1+\sqrt{2} & \mbox{if } D=1\\
			 1+2^{-1}\log_2 M& \mbox{if } D=2 \\
			 M^{\frac{1}{2}-\frac{1}{D}}+\frac{2^{D/2-1} M^{\frac{1}{2}-\frac{1}{D}}}{2^{D/2-1}-1}& \mbox{if } D>2.
		\end{array}
	\right.
\end{align*}
\end{theorem}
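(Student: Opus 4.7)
The plan is a dyadic chaining argument at geometric scales $q^{-l}\diam(\mathcal X)$, in the spirit of classical estimates of Dudley and Weed--Bach. For each $l = 0, 1, \ldots, l_{\max}$, take a partition $\mathcal P_l$ of $\mathcal X$ with $|\mathcal P_l| \le \mathcal N(\mathcal X, q^{-l}\diam(\mathcal X))$ and cells of $d$-diameter at most $2q^{-l}\diam(\mathcal X)$, obtained from Voronoi-type cells around centres of a near-optimal covering; by iterative refinement the $\mathcal P_l$ can be taken nested.

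First I would establish a deterministic chaining inequality of the form
\[
W_p^p(\mu, \nu) \le 2^{p-1}\diam(\mathcal X)^p q^p \Big[q^{-(l_{\max}+1)p} + \Big(\tfrac{q}{q-1}\Big)^p \sum_{l=1}^{l_{\max}} q^{-lp}\,\|\mu-\nu\|_{\mathrm{TV},\mathcal P_l}\Big],
\]
where $\|\mu-\nu\|_{\mathrm{TV},\mathcal P_l} = \sum_{A\in\mathcal P_l}|\mu(A)-\nu(A)|$, via an explicit coupling: collapse each measure onto representatives in the finest cells $\mathcal P_{l_{\max}}$ (cost $\le q^{-l_{\max}}\diam(\mathcal X)$ per unit mass), then telescope from fine to coarse scales, each scale-$l$ adjustment incurring per-unit cost at most $q^{-l}\diam(\mathcal X)$ and total mass equal to the cell-level discrepancy. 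The constants $2^{p-1}$ and $(q/(q-1))^p$ come, respectively, from $(a+b)^p \le 2^{p-1}(a^p + b^p)$ and from summing a geometric series of cell-diameter contributions; when $p=1$ the convexity step is redundant and transport costs combine linearly, so the factor $(q/(q-1))^p$ can be dropped.

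Specialising to $\nu = \mu^S$, a one-line variance computation gives $\mathbb E|\mu^S(A)-\mu(A)| \le \sqrt{\mu(A)/S}$, and Cauchy--Schwarz over cells yields $\mathbb E\,\|\mu-\mu^S\|_{\mathrm{TV},\mathcal P_l} \le \sqrt{|\mathcal P_l|/S} \le \sqrt{\mathcal N(\mathcal X, q^{-l}\diam(\mathcal X))/S}$. Bounding the residual $q^{-(l_{\max}+1)p}$ term via the trivial singleton partition (contribution $\sqrt{M/S}$) and taking the infimum over $q$ and $l_{\max}$ yields the general $\mathcal E$ bound. For the Euclidean specialisation, enclose $\mathcal X$ in an axis-aligned box of side length at most $\diam(\mathcal X)$ and partition at each scale into axis-aligned sub-cubes, producing nested partitions with $|\mathcal P_l| \le q^{lD}$ and cell $\ell_2$-diameters bounded by $q^{-l}\diam(\mathcal X)\sqrt D$; the $D^{p/2}$ factor in the final bound reflects precisely the $\sqrt D$ blow-up between a coordinate side length and its $\ell_2$-diameter. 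The chaining sum reduces to $\sum_{l=1}^{l_{\max}} q^{l(D/2-p)} = \sum_l q^{lp'}$, and optimising $l_{\max}$ in each of the regimes $p'<0$ (let $l_{\max}\to\infty$), $p'=0$ (balance the leading $M^{1/2}$ term against $l_{\max}$), and $p'>0$ (balance the leading term against the growing sum) produces the three case formulas; setting $q=2$, $p=1$ gives the final corollary.

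The hardest step is the chaining inequality itself: one must design the coupling so that discrepancies at each scale contribute independently and the sharp constants $2^{p-1}$ and $(q/(q-1))^p$ emerge, with the careful refinement at $p=1$ that drops the second factor. The remaining steps (Cauchy--Schwarz, covering-number estimates in $\mathbb R^D$, and balancing the resulting geometric series) are routine optimisation once the chaining bound is in hand.
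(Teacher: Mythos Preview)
Your proposal is correct and follows essentially the same approach as the paper. The paper phrases the argument in terms of an ultrametric tree on $\mathcal X$ (following Sommerfeld et al.\ and Kloeckner) whose height function encodes the geometric scales $q^{-l}\diam(\mathcal X)$, while you phrase it as a chaining argument with nested partitions; these are equivalent descriptions, and the subtree-mass discrepancies $|(S_Tr)_x-(S_Ts)_x|$ in the paper are exactly your cell-level quantities $|\mu(A)-\nu(A)|$. One small point of imprecision: your stated deterministic chaining inequality has a constant residual $q^{-(l_{\max}+1)p}$, but to obtain a bound that is uniformly $S^{-1/2}$ (as in the theorem) the chain must terminate at the singleton partition, so that the residual is $q^{-(l_{\max}+1)p}\|\mu-\nu\|_{\mathrm{TV},\text{singletons}}$ rather than a deterministic constant---your subsequent sentence about the ``trivial singleton partition'' shows you have this in mind, but the displayed inequality should already reflect it.
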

Theorem~\ref{emp_upper} improves upon \cite[Theorem~1]{sommerfeld2019optimal} by running the sum over $l$ from 1 instead of zero and allowing a better prefactor than $q^{2p}$. The proof, however, follows similar lines as that of Theorem $1$ in \cite{sommerfeld2019optimal} and is based on constructing an ultrametric tree on $\mathcal X$ (see Kloeckner \cite{kloeckner2015geometric})  that dominates the original metric $d$ (see also Boissard and Le Gouic \cite{boissard2014mean}). The differences from \cite{sommerfeld2019optimal} are given in Appendix~\ref{sec:app}.
 In our proofs in Section~\ref{sec:bounds}, we shall apply the bound in Theorem~\ref{emp_upper} with $p=1$, in which case the improvement over \cite{sommerfeld2019optimal} is a factor of at least 2.

\section{Empirical Wasserstein Barycenter}\label{sec:bounds}
In this section we present our main results. The stochastic approximation in Theorem~\ref{emp_upper} naturally extends from the optimal transport problem to the barycenter problem by computing the barycenter of empirical versions of the $N$ data measures (see  Algorithm \ref{bary_sub_alg}). We point out that at this point step 8 in Algorithm~\ref{bary_sub_alg} could be performed with any solver for the barycenter problem (see Section~\ref{sec:computation}); no specific method is needed at this step in order for Algorithm~\ref{bary_sub_alg} to work. However, we will later utilise a specialised version of the iterative method introduced by Cuturi and Doucet \cite{cuturiFastComputationWasserstein2014}, which we modify to obtain significantly better performance for empirical measures (see Section~\ref{sec:computation}). As in the classical optimal transport setting, we can average our results over multiple runs to reduce variability, leading to $R$ empirical barycenters $\bar{\mu}_1,\dots , \bar{\mu}_R$.  As a final estimator we take the linear average of those empirical barycenters. This is computationally preferable to using $\bar{\mu}_{r^*}={\text{argmin}}_{r\le R} F(\bar{\mu}_r)$, since evaluating the Fr\'echet functional amounts to solving $N$ large-scale optimal transport problems, and is thus computationally costly.  
In fact, we expect the linear mean to have good performance since convexity of the Wasserstein distance extends to the Fr\'echet functional:
\begin{equation}\label{eq:Fconvex}
F^p\left(\frac{1}{R}\easysum{r=1}{R} \bar{\mu}_r \right) \leq \frac{1}{R}\easysum{r=1}{R}F^p(\bar{\mu}_r), \quad p\geq 1.
\end{equation}
For a more detailed discussion on the choice of $R$ as well as the estimator obtained from $R$ repeats, we refer to Subsection~\ref{subsec:R}.

We now provide the theoretical justification for our resampling method, by giving nonasymptotic bounds on the expected error of the empirical barycenter.  These exhibit convergence rate of $S^{-\frac12}$ independently of the dimension.  It will be assumed henceforth that $(\mathcal X,d)$ is a subspace of a geodesic space $\mathcal Y$, so that the set $\mathcal C$ is well-defined. Since any normed vector space is a geodesic space, our results are valid, in particular, when $\mathcal X$ can be embedded in a Euclidean space of arbitrary dimension. The minimum in the Fr\'echet functional is taken over all measures in $\mathcal{P}(\mathcal{Y})$, but can also be equivalently taken only on $\mathcal P(\mathcal C)$. 
In view of the above inequality \eqref{eq:Fconvex}, it suffices to consider the case $R=1$ in the following for simplicity (see Remark~\ref{rem::generalR} for details).
\begin{algorithm}
\caption{Sampling approximation of the Wasserstein barycenter}
\label{bary_sub_alg}
\begin{algorithmic}[1]
\State {Data Measures: $\mu_1,\dots, \mu_N$, sample size $S$, repeats $R$}
\For{$r=1,\dots ,R$}
\For{$i=1,\dots ,N$}
\State {Draw $X^{(i)}_1,\dots ,X^{(i)}_S \sim \mu_i$}
\State {$\mu^S_i= \frac{1}{S}\sum_{k=1}^{S} \delta_{X^{(i)}_k}$}
\EndFor
\State {Solve $\bar{\mu}_r \in \underset{\mu \in \mathcal{P}(\mathcal{C})}\argmin \frac{1}{N}\easysum{i=1}{N}W_p^p(\mu,\mu^S_i)$}
\EndFor

\State {Set $\widehat\mu^*_S=\frac{1}{R}\easysum{r=1}{R} \bar{\mu}_r$}
\Return{Approximation of the empirical Wasserstein barycenter $\widehat\mu^*_S$}
\end{algorithmic}
\end{algorithm}

\begin{theorem}\label{full_frechetbound}
Let $\mu_1,\dots ,\mu_N$ be probability measures on $\mathcal{X}$ and let $\mu_1^{S_1}, \dots ,\mu_N^{S_N}$ be the corresponding empirical measures based on $S_i \in \mathbb{N}$ independent samples. Let $F^p$ and $\widehat F^p_S$, $p\geq 1$, denote the respective $p$-Fr\'echet functionals given by
\begin{align*}
F^p(\mu):=\frac{1}{N}\ \easysum{i=1}{N}\ W_p^p(\mu_i,\mu), \quad \widehat{F}_{S}^p(\mu)=\frac{1}{N}\ \easysum{i=1}{N}\ W_p^p(\mu_i^{S_i},\mu)
\end{align*}
and let $\mu^*$ and $\widehat\mu_{S}^*$ be their respective minimisers.  Then
\begin{align*}
\mathbb{E}[\lvert F^p(\mu^*)-F^p(\widehat \mu_S^*) \rvert ] \leq \frac{2p \ \diam(\mathcal{X})^{p}}{N}\easysum{i=1}{N}\frac{\mathcal{E}(\text{supp}(\mu_i),1)}{\sqrt{S_i}}    ,
\end{align*}
for $\mathcal{E}$ as in Theorem~\ref{emp_upper}. In particular, when $p=2$ and $S_i=S$ for all $i=1,\dots ,N$, we have
\begin{align*}
\mathbb{E}[\lvert F(\mu^*)-F(\widehat \mu_S^*) \rvert ] \leq   \frac{4    \ \diam(\mathcal{X})^2 }{N\sqrt{S}} \easysum{i=1}{N}\mathcal{E}(\text{supp}(\mu_i),1).
\end{align*} 
\end{theorem}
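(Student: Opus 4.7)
The plan is a standard optimality-sandwich argument that reduces the barycenter estimation error to the empirical approximation error of each individual measure, for which Theorem~\ref{emp_upper} (with $p=1$) is perfectly suited. First, using $F^p(\mu^*)\le F^p(\widehat\mu_S^*)$ and $\widehat F_S^p(\widehat\mu_S^*)\le \widehat F_S^p(\mu^*)$, I would write
\[
0\le F^p(\widehat\mu_S^*)-F^p(\mu^*)\le [F^p(\widehat\mu_S^*)-\widehat F_S^p(\widehat\mu_S^*)]+[\widehat F_S^p(\mu^*)-F^p(\mu^*)],
\]
and take absolute values. It then suffices to control $\mathbb{E}[|F^p(\mu)-\widehat F_S^p(\mu)|]$ for any (possibly random) measure $\mu$ supported on $\mathcal{C}$, losing only a factor of $2$.

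Second, by the definition of the Fréchet functionals, the difference decomposes as $|F^p(\mu)-\widehat F_S^p(\mu)|\le N^{-1}\sum_i|W_p^p(\mu_i,\mu)-W_p^p(\mu_i^{S_i},\mu)|$, and it remains to bound each summand by a multiple of $W_1(\mu_i,\mu_i^{S_i})$. The plan is to do this via the gluing lemma: construct a joint law $\gamma$ on $\mathcal{X}\times\mathcal{X}\times\mathcal{Y}$ with marginals $(\mu_i^{S_i},\mu_i,\mu)$, whose $(1,2)$-marginal is optimal for $W_1(\mu_i^{S_i},\mu_i)$ and whose $(2,3)$-marginal is optimal for $W_p^p(\mu_i,\mu)$. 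Applying $|a^p-b^p|\le p\max(a,b)^{p-1}|a-b|$ together with $|d(x_1,x_3)-d(x_2,x_3)|\le d(x_1,x_2)$ pointwise under $\gamma$ and integrating yields
\[
|W_p^p(\mu_i^{S_i},\mu)-W_p^p(\mu_i,\mu)|\le p\,\diam(\mathcal{X})^{p-1}\,W_1(\mu_i,\mu_i^{S_i}).
\]
Taking expectation and invoking Theorem~\ref{emp_upper} with $p=1$ to bound $\mathbb{E}[W_1(\mu_i,\mu_i^{S_i})]$ by $\diam(\mathrm{supp}(\mu_i))\mathcal{E}(\mathrm{supp}(\mu_i),1)/\sqrt{S_i}\le \diam(\mathcal{X})\mathcal{E}(\mathrm{supp}(\mu_i),1)/\sqrt{S_i}$, then summing over $i$ and multiplying by the factor $2$ from Step~1, produces exactly the claimed bound; the $p=2$, $S_i\equiv S$ specialisation is immediate.

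The main technical subtlety is the diameter bookkeeping: the factor $\max(d(x_1,x_3),d(x_2,x_3))^{p-1}$ is a priori bounded only by $\diam(\mathcal{X}\cup\mathcal{C})^{p-1}$ since $\mu$ is supported on $\mathcal{C}\subseteq\mathcal{Y}$ rather than on $\mathcal{X}$. One needs to argue that this extended diameter is still controlled by $\diam(\mathcal{X})$, or else to read $\diam(\mathcal{X})$ in the statement as referring to the relevant ambient set. In the Euclidean $p=2$ case this is automatic, as $\mathcal{C}$ lies in the convex hull of the data supports; more generally, the optimality defining any $c\in\mathcal{C}$ gives $d(x,c)^p\le (N-1)\diam(\mathcal{X})^p$ by plugging in a competitor from $\mathrm{supp}(\mu_j)$, which yields a constant depending on $N,p$ that must either be absorbed into the stated $\diam(\mathcal{X})^p$ or handled separately. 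Everything else is routine bookkeeping.
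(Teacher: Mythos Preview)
Your proposal is correct and mirrors the paper's proof: the same optimality sandwich, the Lipschitz inequality $|W_p^p(\mu,\nu)-W_p^p(\mu',\nu)|\le p\,\diam(\mathcal X)^{p-1}W_1(\mu,\mu')$ (which the paper simply cites from \cite{sommerfeld2018inference} rather than deriving via gluing), and then Theorem~\ref{emp_upper} with $p=1$. The diameter subtlety you flag is genuine and is glossed over in the paper's proof as well; in the Euclidean case it is harmless since $\mathcal C$ lies in the convex hull of $\mathcal X$, so $\diam(\mathcal X\cup\mathcal C)=\diam(\mathcal X)$.
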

\begin{proof}
In \cite{sommerfeld2018inference} it is shown that for arbitrary measures $\mu,\mu^\prime ,\nu\in \mathcal{P}(\mathcal{X})$ and $p\ge1$, it holds that
\begin{align*}
|W_p^p(\mu,\nu)-W_p^p(\mu^{\prime},\nu) |\leq W_1(\mu,\mu^{\prime})  \ p \ \diam(\mathcal{X})^{p-1}.
\end{align*}
Therefore, for any $\mu\in \mathcal{P}(\mathcal{S})$,
\begin{align*}
\mathbb{E}\left[\lvert F^p(\mu)-\widehat{F}^p_S(\mu)\rvert\right]
&\leq \frac{1}{N}\easysum{i=1}{N}\mathbb{E}\left[\lvert W_p^p(\mu_i^S,\mu)-W_p^p(\mu_i,\mu) \rvert \right] \\
&\leq \frac{1}{N}\easysum{i=1}{N}\mathbb{E} \left[W_1(\mu_i,\mu_i^S)\right]   \diam(\mathcal{X})^{p-1} p \\
&\leq \frac{1}{N}\easysum{i=1}{N} \frac{\mathcal{E}(\mathcal{X}_i,1)   \diam(\mathcal{X})^{p} p}{\sqrt{S_i}},
\end{align*}
where $\mathcal{X}_i:=\text{supp}(\mu_i)$. Since $\mu^*$ and $\widehat \mu_S^*$ are minimisers of their respective Fr\'echet functionals, deduce that
\begin{align*}
\mathbb{E}[\lvert F^p(\widehat{\mu}^*_S)-F^p(\mu^*) \rvert ]&=\mathbb{E}[ F^p(\widehat \mu_S^*)-F^p(\mu^*)]\\ 
&\leq\mathbb{E}[ F^p(\widehat \mu_S^*)-\widehat{F}^p_S(\mu^*)]+ \frac{1}{N}\easysum{i=1}{N} \frac{\mathcal{E}(\mathcal{X}_i,1) \diam(\mathcal{X})^{p} p}{\sqrt{S_i}}\\
&\leq\mathbb{E}[ F^p(\widehat \mu_S^*)-\widehat{F}^p_S(\widehat \mu_S^*)]+ \frac{1}{N}\easysum{i=1}{N} \frac{\mathcal{E}(\mathcal{X}_i,1)   \diam(\mathcal{X})^{p}  p}{\sqrt{S_i}} \\
&\leq \frac{2}{N} \easysum{i=1}{N} \frac{\mathcal{E}(\mathcal{X}_i,1)  \diam(\mathcal{X})^{p} p}{\sqrt{S_i}}.
\end{align*}
\end{proof}
Using the bound on $\mathcal E$ from Theorem~\ref{emp_upper} on the space $\mathcal X=[0,1]^D$ with $p=2$, we obtain
\begin{align*}
\mathbb{E}[\lvert F^2(\mu^*)-F^2(\widehat \mu_S^*) \rvert ] \leq S^{-\frac12}\begin{cases}
8\sqrt 2 (2+\log_2(M))  & D=2\\
183.5 M^{\frac{1}{6}} & D=3.
\end{cases}
\end{align*}

This result gives the rate of approximation in terms of the objective value.  Approximating the optimisers is a more delicate matter, which can be addressed using the linear programming structure of the problem.  Since the barycenters are not necessarily unique, the best we can hope for is to approximate one of them.
\begin{theorem}\label{ws_bound}
Let $\mu_1 ,\dots ,\mu_N$ be probability measures on $\mathcal{X}$ and let $\mu_1^S,\dots ,\mu_N^S$ be empirical measures obtained from $S$ i.i.d.\  samples. Let $\mathbf{B}^*$ be the set of barycenters of the $\mu_i$ and $\mathbf{B}^*_S$ the set of barycenters of the $\mu_i^S$. Then for $p\geq 1$
\begin{align*}
\mathbb{E}\left[ \underset{\widehat \mu_S^* \in \mathbf{B}_S^*}{\sup} \underset{\mu^* \in \mathbf{B}^*}{\inf} W^p_p(\mu^*,\widehat \mu_S^*) \right]\leq  \frac{p \bar{\mathcal{E}}  \ \diam(\mathcal{X})^{p} }{C_{P}}S^{-\frac12},
\end{align*}
where $\bar{\mathcal{E}}= \sum_{i=1}^{N}\mathcal{E}(\text{supp}(\mu_i),1)/N$ and $C_P$ is a strictly positive constant given by
\begin{align*}
C_{P}:=C_{P}(\mu_1,\dots,\mu_N):=(N+1)\diam(\mathcal{X})^{-p}\underset{v \in V\backslash V^*}{\min} \ \frac{c^Tv-f^*}{d_1(v,\mathcal M)},
\end{align*}
where $V$ is the vertex set of the linear programming formulation of the barycenter problem \eqref{eq:linProgBary}, $V^*$ is the subset of optimal vertices, $c$ is the cost vector of the program, $f^*$ is the optimal value, $\mathcal M$ is the set of minimisers of the problem \eqref{eq:linProgBary}, and $d_1(v,\mathcal M)=\inf_{x\in M}\|v-x\|_1$.
\end{theorem}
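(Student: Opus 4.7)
The plan is to exploit the linear-programming structure underlying \eqref{eq:linProgBary}. Fix a sample realisation and let $\widehat\mu^*_S \in \mathbf B^*_S$ be arbitrary, with weights $\widehat a$ on $\mathcal C$. Letting $\widetilde\pi^{(i)}$ denote an optimal coupling in $\Pi(\widehat\mu^*_S,\mu_i)$, the block vector $\widetilde x := (\widehat a,\widetilde\pi^{(1)},\dots,\widetilde\pi^{(N)})$ is feasible for the true LP \eqref{eq:linProgBary} and has objective value $c^T\widetilde x = F^p(\widehat\mu^*_S)$. Hence $c^T\widetilde x - f^* = F^p(\widehat\mu^*_S)-F^p(\mu^*)\ge 0$, an object already controlled in expectation via Theorem~\ref{full_frechetbound}. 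The task reduces to converting this objective-value gap into an $L^1$-bound on $\widehat a - a^*$ for a suitable $\mu^*\in\mathbf B^*$.

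For this I would first extend the defining inequality of $C_P$ from vertices to all feasible points. Any feasible $x$ admits a representation $x=\sum_j\lambda_jv_j$ with $v_j$ vertices; replacing each non-optimal $v_j$ by its $d_1$-nearest point $\tilde v_j\in\mathcal M$ and leaving the optimal vertices untouched yields, by convexity of $\mathcal M$, a point $x^*\in\mathcal M$ with
\[
d_1(x,\mathcal M)\le\|x-x^*\|_1\le\sum_{v_j\notin V^*}\lambda_j\,d_1(v_j,\mathcal M)\le\frac{N+1}{C_P\diam(\mathcal X)^p}(c^Tx-f^*),
\]
using the vertex inequality implied by the definition of $C_P$. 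Applying this to $\widetilde x$ provides a bound on $d_1(\widetilde x,\mathcal M)$ in terms of the Fr\'echet gap.

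The second ingredient is a marginal trick that converts the full $d_1$-distance into a bound on the $a$-component alone. For any $x^*=(a^*,\pi^{*(1)},\dots,\pi^{*(N)})\in\mathcal M$, the plans $\widetilde\pi^{(i)}$ and $\pi^{*(i)}$ share the marginal $b^i$ but have marginals $\widehat a$ and $a^*$ on the other axis; contracting onto that marginal gives $\|\widetilde\pi^{(i)}-\pi^{*(i)}\|_1\ge\|\widehat a-a^*\|_1$ and hence $\|\widetilde x-x^*\|_1\ge(N+1)\|\widehat a-a^*\|_1$. Taking $x^*$ to be a $d_1$-nearest point to $\widetilde x$ in $\mathcal M$ and letting $\mu^*$ be the measure with weights $a^*$ (which lies in $\mathbf B^*$), the two previous displays combine to
\[
\|\widehat a-a^*\|_1\le\frac{d_1(\widetilde x,\mathcal M)}{N+1}\le\frac{F^p(\widehat\mu^*_S)-F^p(\mu^*)}{C_P\diam(\mathcal X)^p}.
\]

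To finish, I would use the elementary bound $W_p^p(\mu^*,\widehat\mu^*_S)\le\tfrac12\diam(\mathcal X)^p\|\widehat a-a^*\|_1$, obtained by retaining the common mass $\min(\widehat a_j,a_j^*)$ and transporting the remainder arbitrarily. This yields, for every $\widehat\mu^*_S\in\mathbf B^*_S$,
\[
\inf_{\mu^*\in\mathbf B^*}W_p^p(\mu^*,\widehat\mu^*_S)\le\frac{F^p(\widehat\mu^*_S)-F^p(\mu^*)}{2C_P}.
\]
Since the pointwise bound $|F^p(\mu)-\widehat F_S^p(\mu)|\le N^{-1}p\diam(\mathcal X)^{p-1}\sum_i W_1(\mu_i,\mu_i^S)$ used in the proof of Theorem~\ref{full_frechetbound} is uniform in $\mu$, the standard two-step optimality argument gives $\mathbb E[\sup_{\widehat\mu^*_S\in\mathbf B^*_S}(F^p(\widehat\mu^*_S)-F^p(\mu^*))]\le 2p\diam(\mathcal X)^p\bar{\mathcal E}/\sqrt S$, from which the stated bound follows. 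The main obstacle I expect is the Hoffman-style extension of $C_P$ to non-vertex points, together with the careful bookkeeping of the factor $N+1$ arising from the block structure of $(a,\pi^{(1)},\dots,\pi^{(N)})$; this factor precisely cancels the $N+1$ built into the normalisation of $C_P$, delivering the clean constant in the statement.
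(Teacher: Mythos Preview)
Your proof is correct and reaches the same conclusion as the paper, but the route to the key inequality
\[
d_1(\widetilde x,\mathcal M)\le\frac{N+1}{C_P\diam(\mathcal X)^p}\bigl(c^T\widetilde x-f^*\bigr)
\]
is genuinely different. The paper packages this as Lemma~\ref{slope_bound} and proves it by showing that the ratio $\psi(\pi)=(c^T\pi-f^*)/d_1(\pi,\mathcal M)$ attains its infimum over $P\setminus\mathcal M$ at a vertex; this requires an auxiliary geometric result (Lemma~\ref{poly_affine}) that $t\mapsto d_1(x+ty,\mathcal M)$ is convex and piecewise affine, followed by a directional analysis tracing a $\psi$-nonincreasing path from any interior point through successive faces down to a vertex. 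Your convex-combination argument---writing $x=\sum_j\lambda_j v_j$, projecting each vertex onto $\mathcal M$, and using linearity of the objective together with convexity of $\mathcal M$---is a standard Hoffman-type bound and is considerably more elementary; it bypasses Lemma~\ref{poly_affine} entirely. What the paper's longer argument buys is slightly sharper structural information (the infimum of $\psi$ is in fact attained at a non-optimal vertex adjacent to $V^*$, cf.\ the remark following Theorem~\ref{ws_bound}), but this is not needed for the theorem itself.

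The remaining steps---the marginal contraction $\|\widetilde\pi^{(i)}-\pi^{*(i)}\|_1\ge\|\widehat a-a^*\|_1$ yielding the factor $N+1$, the bound $W_p^p\le\tfrac12\diam(\mathcal X)^p\|\widehat a-a^*\|_1$ via total variation, and the uniform-in-$\mu$ Lipschitz estimate feeding into Theorem~\ref{full_frechetbound}---are exactly those used in the paper's proofs of Lemma~\ref{slope_bound} and Theorem~\ref{ws_bound}.
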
 
\begin{remark}
Using the convexity of $P$ and the linearity of the objective function, one can show that the minimum in $C_P$ is attained at a vertex $v\in V\setminus V^*$ which is adjacent to some $v^*\in V^*$.
\end{remark}
\begin{remark}
A stronger version of Theorem~\ref{ws_bound} can be shown, if one uses total variation instead of the $p$-Wasserstein distance. This can be achieved by skipping the last inequality of the proof.
\end{remark}
\begin{remark}
Theorem~\ref{ws_bound} can be generalised, using (\ref{eq:Fconvex}), to a general $R$, in which case the statement becomes slightly more complicated, however. For $r=1,\dots ,R$ let $\textbf{B}_{S,r}^*$ be the set of barycenters of the $N$ empirical measures from the $r$-th repeat. The set $\textbf{B}_S^*$ needs to be replaced by the set
\begin{align*}
\left\{ \frac{1}{R} \easysum{r=1}{R} \mu_r \ \Bigg\vert \ \mu_r \in  \textbf{B}_{S,r}^* \right\}
\end{align*}
of measures that can be obtained as a linear average of barycenters from the $R$ repeats. The upper bound remains the same.
\label{rem::generalR}
\end{remark}
\begin{remark}
The rate $S^{-\frac12}$ is optimal. This can already be seen in the case $N=1$, where the Fr\'echet functional is simply the $p$-th power of the $p$-Wasserstein distance between a measure $\mu$ and its empirical version $\mu^S$.  For a finitely supported and nondegenerate measure $\mu$, this has a lower bound scaling as $S^{-\frac12}$ (e.g., Fournier and Guillin \cite{fournier2015rate}), implying optimality of our rate. For a concrete example let $\mu=\frac{1}{2}(\delta_0 + \delta_1)$. By construction we have for some random variable $K\sim \text{Bin}(S,1/2)$ that
\begin{align*}
\mathbb{E}\left[W_p^p(\mu,\mu^S)\right]=\mathbb{E} \left\vert \frac{1}{2}-\frac{K}{S} \right\vert = S^{-1} \lvert \mathbb{E} [K]-K \rvert\geq S^{-1}2^{-\frac{1}{2}}\sqrt{S/4} = \sqrt 2S^{-\frac{1}{2}}/4,
\end{align*}
where the inequality follows from the properties of the mean absolute deviation of the binomial distribution (Berend and Kontorovich \cite{berend2013sharp}).
\end{remark}
The proof of Theorem~\ref{ws_bound} requires two auxiliary, geometric results that may be of independent interest.  To streamline the presentation, their proofs are given in Appendix~\ref{sec:app}.

For a nonempty $P\subseteq\mathbb R^L$ and $x\in \mathbb R^L$, define $d_1(x,P)=\inf_{z\in P}\|x-z\|_1$.  When $P$ is closed, the infimum is attained, since it can be taken on the compact set $P\cap\{y:\|y-x\|\le d_1(x,P)+1\}$.
\begin{lemma} \label{poly_affine}
Let $P \subseteq \mathbb{R}^L$ be a nonempty polyhedron and let $x,y \in \mathbb{R}^L$.  Then the function
\begin{align*}
g(t) = d_1(x+ty,P)=\min_{z\in P} \lVert x+ty-z \rVert_1,\qquad t\in \R
\end{align*}
is convex, Lipschitz and piecewise affine.
\end{lemma}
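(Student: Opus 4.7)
The plan is to dispatch convexity and the Lipschitz bound via standard norm arguments, and then extract piecewise affinity by recognising $g$ as the optimal value of a parametric linear program and passing to the dual.

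Convexity of $g$ follows by composition: the map $w \mapsto d_1(w,P) = \min_{z \in P}\|w-z\|_1$ is the distance from $w$ to the convex set $P$ in a norm, hence convex on $\mathbb{R}^L$, and $t \mapsto x+ty$ is affine. The Lipschitz estimate is equally immediate from the reverse triangle inequality:
\[
|g(t_1) - g(t_2)| \le \|(x+t_1 y) - (x+t_2 y)\|_1 = |t_1-t_2|\,\|y\|_1,
\]
so $g$ is $\|y\|_1$-Lipschitz.

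For piecewise affinity I would exhibit $g$ as the value of a parametric LP whose right-hand side depends affinely on $t$. Writing $P = \{z : Az \le b\}$ and linearising the $\ell_1$-norm with auxiliary variables $u_i \ge \pm(x_i+ty_i-z_i)$ yields
\[
g(t) = \min\bigl\{\mathbf{1}^T u : Az \le b,\ u+z \ge x+ty,\ u-z \ge -(x+ty)\bigr\},
\]
an LP in $(u,z)$ whose constraint matrix is independent of $t$ and whose right-hand side is affine in $t$. After bringing this to standard form (e.g.\ by splitting $z=z^+-z^-$ with $z^\pm \ge 0$) and dualising, the dual feasible polyhedron $D$ does not depend on $t$, lies inside the nonnegative orthant (hence is pointed, so has finitely many vertices $v_1,\dots,v_K$), and the dual objective evaluated at $v_j$ is an affine function $\alpha_j + \beta_j t$ of $t$. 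For every $t$ the primal is feasible (take any $z\in P$ and set $u_i = |x_i+ty_i-z_i|$) with value bounded below by $0$, so strong LP duality gives
\[
g(t) = \max_{1 \le j \le K}(\alpha_j + \beta_j t),
\]
a pointwise maximum of finitely many affine functions, which is convex and piecewise affine.

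The main bookkeeping step is the dualisation and verifying that $D$ admits vertices so that the dual optimum is attained at one of them; this is transparent here because $D$ sits in the nonnegative orthant. An alternative that sidesteps duality would cite the parametric-LP theorem that the optimal value is convex and piecewise affine on its effective domain under affine RHS perturbations, but the dual-vertex viewpoint is attractive because it produces the affine pieces $\alpha_j + \beta_j t$ explicitly, yields convexity as a free byproduct, and even recovers the Lipschitz constant as $\max_j|\beta_j| \le \|y\|_1$.
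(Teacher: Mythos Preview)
Your proof is correct and takes a genuinely different route from the paper's.  The paper argues piecewise affinity by a local geometric analysis: it fixes a minimiser $z_0\in\argmin d_1(x,P)$, shows via a subsequence argument that minimisers $z_t$ of $d_1(x+ty,P)$ can be chosen with $z_t\to z_0$, then exploits the local cone structure of the polyhedron near $z_0$ (namely $z_0+v\in P\Rightarrow z_0+2v\in P$ for small $v$) together with a coordinate-wise sign analysis on $x-z_0$ to deduce $g(0)+g(2t)\le 2g(t)$ for all small $t>0$; combined with convexity this forces $g$ to be affine on some $[0,t_0]$, and the argument is iterated along the line.  Your LP-duality approach is more structural and arguably cleaner: since only the primal right-hand side depends on $t$, the dual feasible set $D$ is fixed and the dual objective is affine in $t$; pointedness of $D$ (it sits in the nonnegative orthant) together with primal feasibility and boundedness below guarantees that the dual optimum is attained at one of finitely many vertices, yielding $g(t)=\max_j(\alpha_j+\beta_j t)$ globally in one stroke.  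What you gain is a global description of the affine pieces and a proof that works uniformly for unbounded $P$ without any limiting argument; what the paper's approach buys is self-containment (no appeal to LP duality or vertex existence), which matches how the lemma is used downstream in Lemma~\ref{slope_bound}, where only the existence of one-sided derivatives and local affinity of the denominator near $t=0$ is actually invoked.
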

An illustration of Lemma~\ref{poly_affine} in $\R^2$ is given in Figure~\ref{fig:poly_affine}. 
\begin{figure}
\includegraphics[width=\textwidth]{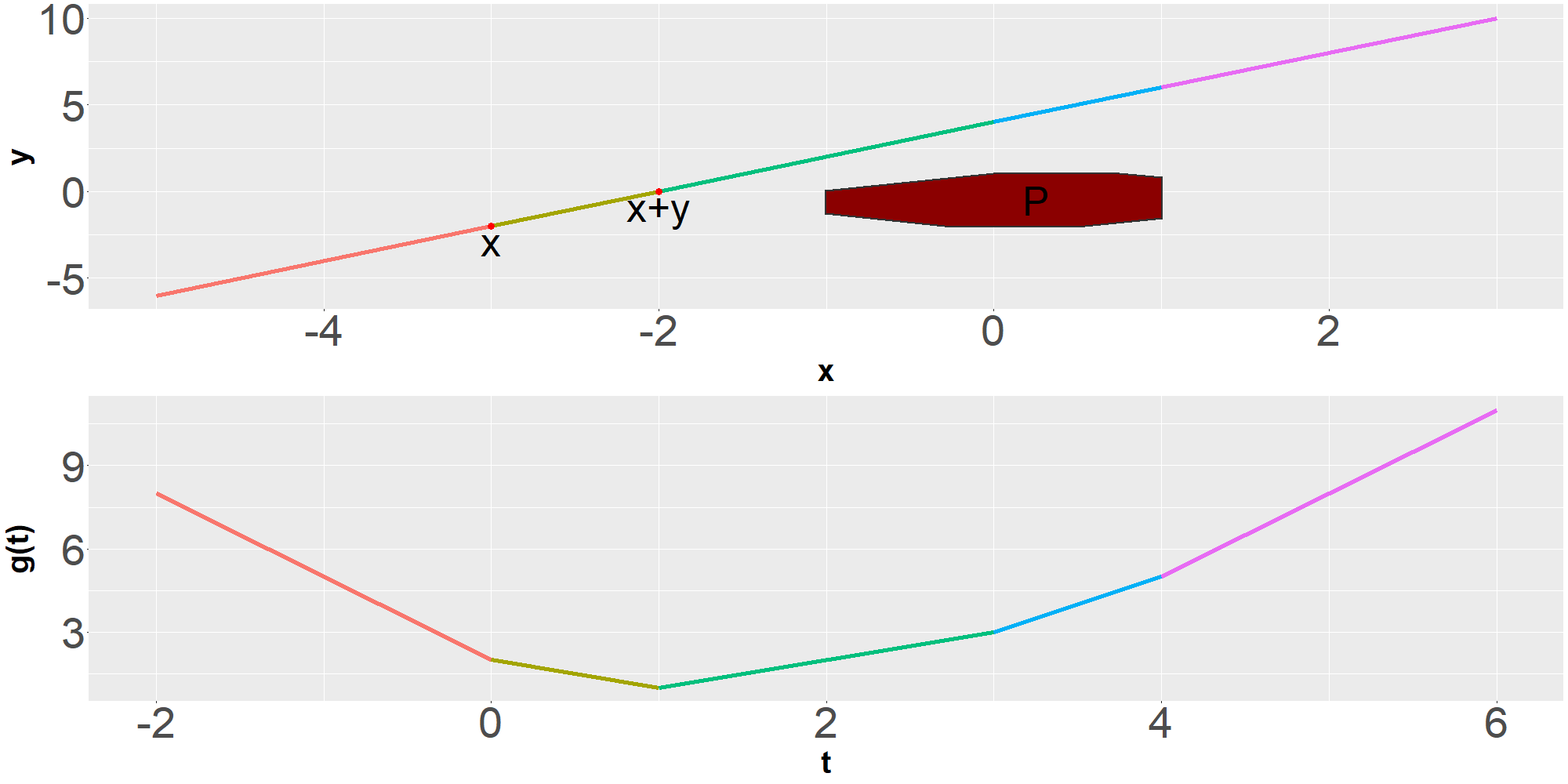}
\caption{\textbf{Top:} the polytope $P$ and the vectors $x=(-3,-2)$ and $y=(1,2)$.  \textbf{Bottom:} the function $g(t)$.  The different colours correspond to segments on which $g$ is affine.  On the segment $t\in (1,3)$ the minimiser in $d_1(x+ty,P)$ is not unique.}
\label{fig:poly_affine}
\end{figure}
\begin{figure}
\includegraphics[width=\textwidth]{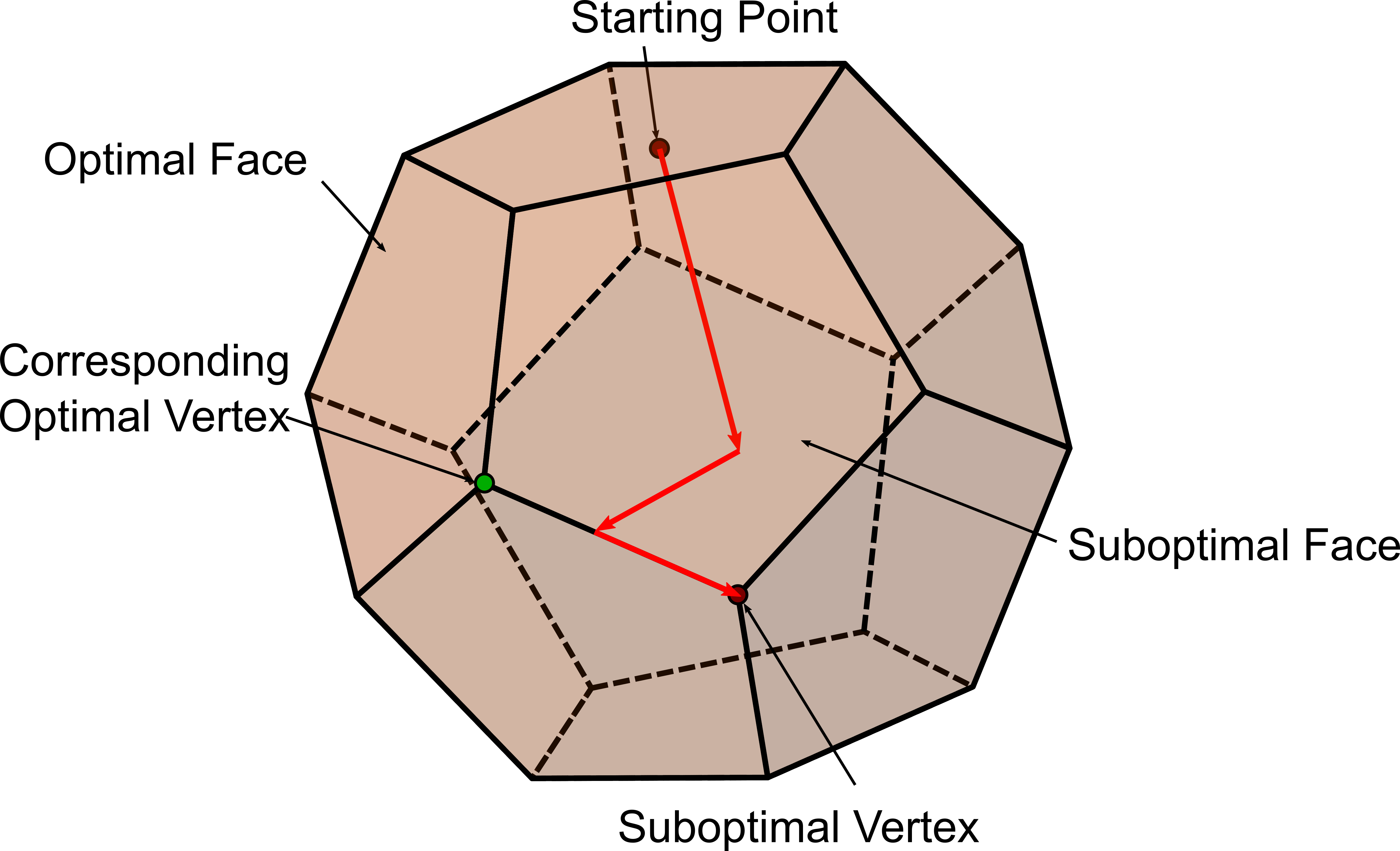}
\caption{Sketch of a path of on which $\psi$ is nonincreasing. We start at a point in the interior of $P$, then move to a face/subface until we hit a vertex.}
\label{fig:vertices}
\end{figure}
\begin{lemma}\label{slope_bound}
Let $F^p$ be the Fr\'echet functional corresponding to $\mu_1,\dots ,\mu_N \in \mathcal{P}(\mathcal{X})$.  Then for any $\mu \in \mathcal P(\mathcal{C})$ there exists a $\mu^* \in {\argmin}_{\mu } F^p(\mu)$ such that
\begin{align*}
F^p(\mu)-F^p(\mu^*) \geq 2C_{P} W_p^p (\mu,\mu^*),
\end{align*}
where $C_{P}$ is the constant from Theorem~\ref{ws_bound}.
\end{lemma}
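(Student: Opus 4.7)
The plan is to exploit the LP structure in \eqref{eq:linProgBary}. Absorbing the factor $1/N$ into the cost vector $c$, the feasible region becomes a compact polytope $P$ with linear objective $c^{T}v$ whose minimum value $f^{*} = F^{p}(\mu^{*})$ is attained on the convex face $\mathcal{M}$ whose extreme points are exactly $V^{*}$. To any $\mu\in\mathcal{P}(\mathcal{C})$ with weight vector $a$ I attach a feasible point $v_{\mu}=(\pi_{\mu}^{(1)},\dots,\pi_{\mu}^{(N)},a)\in P$ by taking each $\pi_{\mu}^{(i)}$ to be an optimal transport plan between $\mu$ and $\mu_{i}$, so that $c^{T}v_{\mu}=F^{p}(\mu)$. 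Compactness of $P$ supplies a vertex decomposition $v_{\mu}=\sum_{k}\alpha_{k}v_{k}$ with $\alpha_{k}\geq 0$ summing to one and $v_{k}\in V$, which is the object on which the vertex-by-vertex estimate supplied by $C_{P}$ will act.

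The definition of $C_{P}$ in Theorem~\ref{ws_bound} yields for every $v_{k}\in V\setminus V^{*}$ the bound
\begin{align*}
c^{T}v_{k}-f^{*} \;\geq\; \frac{C_{P}\,\diam(\mathcal{X})^{p}}{N+1}\,d_{1}(v_{k},\mathcal{M}),
\end{align*}
and the inequality persists trivially for $v_{k}\in V^{*}$ since both sides vanish. For each $k$ pick $u_{k}\in\mathcal{M}$ realising $d_{1}(v_{k},\mathcal{M})=\lVert v_{k}-u_{k}\rVert_{1}$ (with $u_{k}=v_{k}$ when $v_{k}\in V^{*}$) and set $u:=\sum_{k}\alpha_{k}u_{k}$, which lies in $\mathcal{M}$ by convexity; let $\mu^{*}\in\argmin F^{p}$ denote the measure whose weights $a^{*}$ form the weight block of $u$ (the $\pi$-blocks of $u$ are then automatically optimal couplings between $\mu^{*}$ and the $\mu_{i}$'s, since $c^{T}u=f^{*}=F^{p}(\mu^{*})$). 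Linearity of $c^{T}$ and the triangle inequality combine to give
\begin{align*}
F^{p}(\mu)-F^{p}(\mu^{*}) \;=\; \sum_{k}\alpha_{k}(c^{T}v_{k}-f^{*}) \;\geq\; \frac{C_{P}\,\diam(\mathcal{X})^{p}}{N+1}\,\lVert v_{\mu}-u\rVert_{1}.
\end{align*}

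It remains to bound $\lVert v_{\mu}-u\rVert_{1}$ from below by a multiple of $W_{p}^{p}(\mu,\mu^{*})$. Splitting the $\ell^{1}$ norm blockwise and applying the marginal constraint $\sum_{k}\pi^{(i)}_{jk}=a_{j}$ (and the analogous identity for the components of $u$) gives for every $i$ the inequality $\lVert\pi_{\mu}^{(i)}-\pi^{(i),*}\rVert_{1}\geq\sum_{j}|a_{j}-a_{j}^{*}|=\lVert a-a^{*}\rVert_{1}$, whence $\lVert v_{\mu}-u\rVert_{1}\geq(N+1)\lVert a-a^{*}\rVert_{1}$. The standard coupling that leaves the shared mass in place and transports only the total-variation excess supplies $W_{p}^{p}(\mu,\mu^{*})\leq\diam(\mathcal{X})^{p}\lVert a-a^{*}\rVert_{1}/2$, and chaining the three estimates produces the announced bound $F^{p}(\mu)-F^{p}(\mu^{*})\geq 2C_{P}\,W_{p}^{p}(\mu,\mu^{*})$. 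The only real obstacle is bookkeeping: the $(N+1)$ factor inserted into the definition of $C_{P}$ is precisely what cancels the $(N+1)$ amplification of $\lVert v_{\mu}-u\rVert_{1}$ over $\lVert a-a^{*}\rVert_{1}$ caused by the $N$ transport-plan blocks in $v$ all sharing the common marginal $a$, and keeping these constants lined up is what allows the bound to take the clean form $2C_{P}\,W_{p}^{p}(\mu,\mu^{*})$.
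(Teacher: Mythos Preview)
Your proof is correct and takes a genuinely different route from the paper's. The paper establishes that the ratio $\psi(\pi)=(c^{T}\pi-f^{*})/d_{1}(\pi,\mathcal{M})$ attains its infimum over $P\setminus\mathcal{M}$ at a vertex of $P$; this is the substantial step, and it relies on the auxiliary Lemma~\ref{poly_affine} (piecewise affinity of $t\mapsto d_{1}(x+ty,\mathcal{M})$) together with a case-by-case directional argument that moves from any non-vertex point along a line in $P$ without increasing $\psi$ until a face of lower dimension, and ultimately a vertex, is reached. Having shown $\inf\psi=\widetilde{C}_{P}$, the paper then chooses $\pi^{*}$ as the \emph{nearest} element of $\mathcal{M}$ to $\pi$ and obtains $c^{T}\pi-f^{*}\geq\widetilde{C}_{P}\lVert\pi-\pi^{*}\rVert_{1}$ directly. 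Your argument bypasses all of this: you write $v_{\mu}$ as a convex combination of vertices, invoke the defining inequality of $C_{P}$ at each vertex separately, and pass to $u=\sum_{k}\alpha_{k}u_{k}\in\mathcal{M}$ via the triangle inequality. This is considerably more elementary and renders Lemma~\ref{poly_affine} unnecessary for the present lemma. The trade-off is that your $u$ need not be the closest point of $\mathcal{M}$ to $v_{\mu}$, so your intermediate inequality $c^{T}v_{\mu}-f^{*}\geq\widetilde{C}_{P}\lVert v_{\mu}-u\rVert_{1}$ is in principle weaker than the paper's $c^{T}\pi-f^{*}\geq\widetilde{C}_{P}\,d_{1}(\pi,\mathcal{M})$; but since both feed into the same marginal-constraint estimate $\lVert v_{\mu}-u\rVert_{1}\geq(N+1)\lVert a-a^{*}\rVert_{1}$ and the same total-variation bound on $W_{p}^{p}$, the final statement is recovered identically.
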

\begin{proof}[Sketch of proof of Lemma~\ref{slope_bound}]
Let $P\subset \mathbb{R}^L$ be the feasible polytope corresponding to \eqref{eq:linProgBary} and let $f^*$ be the optimal objective value of this linear program. 
To each $\mu \in \mathcal{P}(\mathcal{\mathcal{C}})$ corresponds a $\pi \in P$ such that $c^T\pi=F(\mu)$. 
Fix an element $\pi^* \in \argmin d_1(\pi,\mathcal{M})$, from which we can construct a minimiser $\mu^*$ of $F$. It holds that
\[
F^p(\mu)-F^p(\mu^*)
=c^T \pi - c^T \pi^* =\|\pi-\pi^*\|_1 \frac{c^T \pi - f^*}{\|\pi-\pi^*\|_1}\ge \|\pi-\pi^*\|_1\psi(\pi),
\]
where
\[
\psi (\pi)= \frac{c^T\pi-f^*}{d_1(\pi,\mathcal M)},
\qquad \pi\in P\setminus \mathcal M.
\]
One can now minimise $\psi$ over all elements of $\mathcal{P}$ and show that the minimum is positive and attained at a vertex of the polytope. The main idea of the proof is that for each point $\pi  \in \mathcal{P}$ which is not a vertex, one can find a direction in $\mathcal{P}$ in which one can move without increasing the value of $\psi$. We can then move into this direction until we hit a face/subface of $\mathcal{P}$, at this point we can then construct a new direction along which we can move unless we have hit a vertex. A sketch of this in three dimensions can be seen in Figure~\ref{fig:vertices}. The value of $\psi$ can be controlled using Lemma~\ref{poly_affine}, as it is a ratio of an affine and a piecewise affine convex function.   Finally, we can use the structure of the linear program to bound $\|\pi-\pi^*\|_1$ by the Wasserstein distance between the corresponding measures $\mu$ and $\mu^*$. 
\end{proof}
\begin{proof}[Proof of Theorem~\ref{ws_bound}]
Invoking Lemma~\ref{slope_bound} with $\mu\in \mathbf{B}_S^*$ and applying Theorem~\ref{full_frechetbound} yields
\begin{align*}
\frac{2p \bar{\mathcal{E}}\diam(\mathcal{X})^{p} }{\sqrt{S}} \geq \mathbb{E}\left[F^p(\widehat \mu_S^*)-F^p(\mu^*) \right]\geq \mathbb{E}\left[2C_{P}   \underset{\widehat \mu_S^* \in \mathbf{B}_S^*}{\sup} \underset{\mu^* \in \mathbf{B}^*}{\inf}W_p^p (\widehat \mu_S^*,\mu^*)\right].
\end{align*}
Thus
\begin{align*}
\frac{p \bar{\mathcal{E}} \ \diam(\mathcal{X})^{p} }{C_{P}}S^{-\frac12} \geq \mathbb{E}\left[ \underset{\widehat \mu_S^* \in \mathbf{B}_S^*}{\sup} \underset{\mu^* \in \mathbf{B}^*}{\inf}W^p_p(\mu^*,\widehat \mu_S^*)\right].
\end{align*}
\end{proof}
The constants $C_P$ and $\bar{\mathcal E}$ are scale invariant, i.e., they remain the same if the metric $d$ is multiplied by a positive constant.  Finding a more explicit lower bound for $C_P$ in specific cases (such as measures supported on a regular grid) is an important task for further research.

\begin{remark}
If the number of measures $N$ increases whilst $S$ is kept fixed, then the total number of observations $NS$ increases linearly in $S$.  From a sample complexity perspective, one might be interested in fixing the total sample size $L=\sum_{i=1}^{N}S_i$.  In view of Theorem~\ref{full_frechetbound}, consistency of the Fr\'echet value is achieved if all the $S_i$'s diverge to infinity.  In the ``uniform" case, where $S_i=L/N$, this translates to requiring $N/L\to0$.
\end{remark}

\section{Computational Results}\label{sec:computation}
\subsection{The SUA-Algorithm}
Over the past years many algorithms that allow to compute approximated Wasserstein barycenters have emerged. Most of them focus on either the regularised problem, or on computing the barycenter on a fixed, pre-specified support set.  For reasons discussed in the introduction, we aim to solve the exact, unregularised problem.  In the context of stochastic sampling, methods that consider a common, fixed support are inappropriate, as the resampling inevitably leads to empirical measures with different supports.  Thus, we employ a modification of the iterative, alternating algorithm proposed by Cuturi and Doucet  \cite{cuturiFastComputationWasserstein2014}. Alternating between position and weight updates for general measures creates a massive computational burden. However, there is an interesting empirical observation that renders the algorithm of \cite{cuturiFastComputationWasserstein2014} particularly appealing in the context of stochastic sampling.

Suppose that the measures $\mu_1,\dots ,\mu_N$ are all uniform on the same number of points.  In the notation of Section~\ref{sec:foundations}, we assume that $M_1=M_2\dots =M_N=M$ and $b_k^i=1/M$ for all $i=1,\dots ,N$ and all $k=1,\dots ,M$.  When $N=2$, then we can use the Birkhoff-von Neumann theorem to show that the barycenter of these two measures is also a uniform measure on $M$ points, regardless of the cost function.  If $N\ge 3$ (and $M\ge 3$), then the feasible polyhedron of the barycenter linear program \eqref{eq:linProgBary} has vertices that are not uniform.  Consequently, it is not clear whether the minimiser is also uniform on $M$ points (see also Lin et al.\ \cite{lin2020computational}).  However, we have empirically observed that for the squared Euclidean cost, there is a barycenter that is uniform on $M$ points.  This has been tested on an overwhelming amount of simulations in different scenarios and, with no exception, the barycenter was always uniform.  We do not have a formal proof of this uniformity conjecture, which poses an interesting question for further research. However, in view of the empirical evidence we shall assume in the following that it is practically valid, and explore its computational consequences.

Our observations suggest two immediate improvements to the alternating procedure. Firstly, we no longer need to perform weight updates: they can be chosen uniform. Secondly, we can reduce the number of support points of the candidate barycenter from $MN-N+1$ to $M$. This would have probably been done in practice anyway to reduce computational strain, as, for instance, using a fixed support approximation of the barycenter essentially implies choosing the support size of the barycenter to be $M$ and not performing positions updates. Our per-iteration computational cost is therefore comparable to that of a fixed support barycenter. However, based on this conjecture this support size is actually optimal. Finally, we replace the one-step Newton-type update of \cite{cuturiFastComputationWasserstein2014} by a proper subgradient descent on the positions of the candidate measure.  The procedure can be seen in Algorithm~\ref{SUA}. Note that this problem is non-convex, so, in order to avoid local minima, we give the algorithm a ``warmstart" by performing a fixed number of stochastic subgradient descent steps to obtain an initial point for the genuine (i.e., non-stochastic) subgradient descent.  The left panel of Figure~\ref{HS} shows the advantages of this small modification, which comes at little computational cost, but provides a noticeable improvement. Here, performance is measured in terms of the (empirical) relative error in the Fr\'echet functional, namely
\begin{align*}
\frac{F(\widehat\mu^*_S)-F(\mu^*)}{F(\mu^*)}.
\end{align*}
Without the warmstart our runs provided a median relative error of $5.2\%$, which reduced to $0.8\%$ with a warmstart. Over half of the runs with warmstart yielded a relative error under $1\%$, which is a good indication that we can obtain good results using very few restarts. However, even without the warmstart, we typically obtain relative errors in the single-digit percent points, which is still tolerable in many applications.

Whilst the measures $\mu_1,\dots,\mu_N$ are typically not uniform, the uniform approximation can be applied to their empirical versions $\mu_1^S,\dots,\mu_N^S$; the algorithm is still applicable if some points are duplicate by taking some of the rows of some $Y^i$ to be the same.  Thus, the SUA-method reduces the problem size by replacing $M_i$ with $S$, whilst at the same time enforcing a setting in which a much faster method can be employed.  This two-fold gain pushes down the computation time by orders of magnitude, and provides the basis for our simulation results in the following section. 
\begin{figure}
    \centering
    \begin{subfigure}{0.48\textwidth}
            \includegraphics[width=\textwidth]{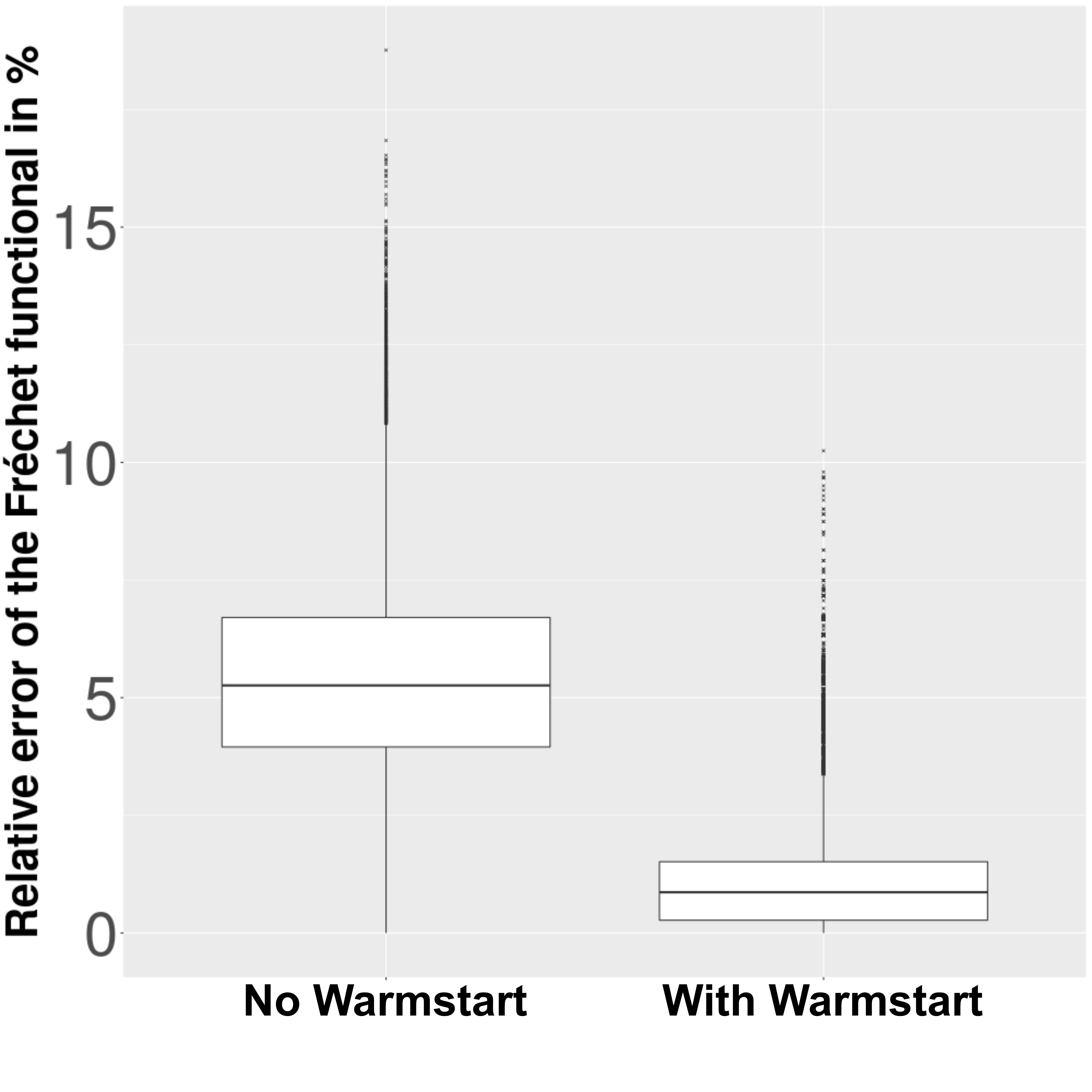}
    \end{subfigure}
        \begin{subfigure}{0.48\textwidth}
    \includegraphics[width=\textwidth]{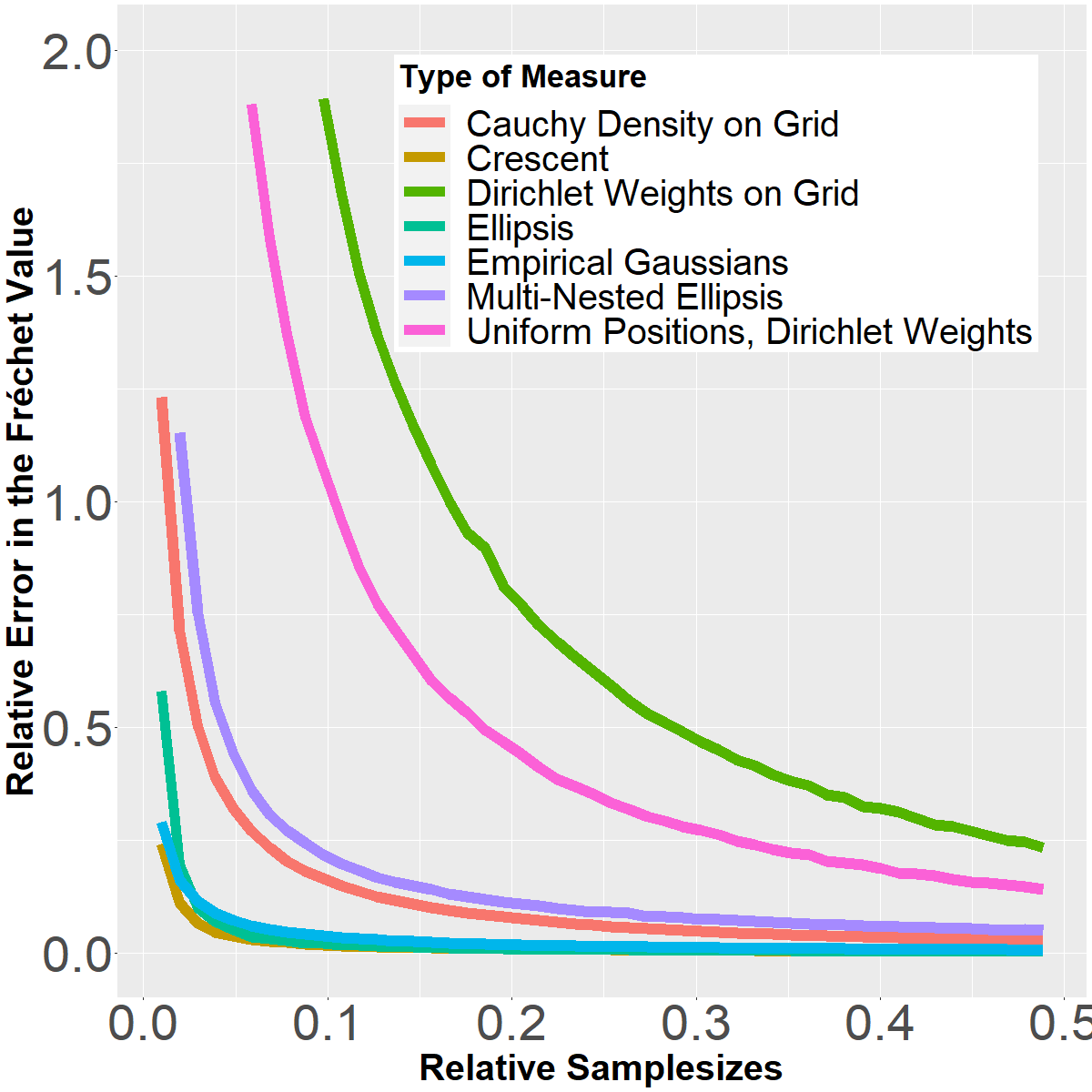}
    \end{subfigure}
    \caption{\textbf{Left:} Relative Fr\'echet error from $50000$ runs, on $N=4$ measures with $M=20$ uniformly random positions with weights $1/M$, with and without warmstart, respectively. \textbf{Right:} Simulated expected error of the Fr\'echet functional of $N=20$ measures with $M=1024$ points in $\mathbb{R}^2$ based on $100$ repeated runs for each value of $S$. Example measures from the corresponding dataset and example barycenters for some values of $S$ can be seen in Figure~\ref{fig:2ddata} and Figure~\ref{ImageBarys}, respectively.}
        \label{HS}
\end{figure}
\begin{algorithm}
\caption{Stochastic Uniform Approximation (SUA)--algorithm for the Wasserstein barycenter}
\label{SUA}
\begin{algorithmic}[1]
\State {Data Measures: $\mu_1,\dots, \mu_N$, sample size $S$, repeats $R$, initial position matrix $X^0\in \mathbb{R}^{S \times d}$, sequence of stepsizes $(\alpha_n)_{n \in \mathbb{N}}$, $\alpha_n>0 \ \forall \ n\in\mathbb{N}$.}
 \State {$\Pi = \{ T\in \mathbb{R}^{S \times S} \quad\vert \quad T \textbf{1}_{S}=\textbf{1}_S^TS^{-1}=T^T\textbf{1}_S^T\}$}
\For{$r=1,\dots ,R$}
\For{$i=1,\dots ,N$}
\State{Draw $X^{(i)}_1,\dots ,X^{(i)}_S \sim \mu_i$}
\State {$\mu^S_i= \frac{1}{S}\sum_{k=1}^{S} \delta_{X^{(i)}_k}$}
\EndFor
\State{$Y^{i}=\text{supp}(\mu_i^S)$}

\While{not converged}
\For{$i=1\dots ,N$}
\State{$T_i = \underset{T \in \Pi}{\argmin}$ $\easysum{k=1}{S}\easysum{j=1}{S}T_{kj}\lVert X_k-Y_j^i \rVert_2^2$}
\State{$V_i = 2(X-T_iY^{(i)})$}
\EndFor
\State{$X^{n+1} = X^{n}-\frac{\alpha_n}{N}\easysum{i=1}{N}V_i$}
\State{$n = n+1$}
\EndWhile
\State{$\bar{\mu}_r=\frac{1}{S}\easysum{k=1}{S}\delta_{X_k^{n}}$}
\EndFor
\State{Set $\widehat\mu^*_S=\frac{1}{R}\easysum{r=1}{R} \bar{\mu}_r$}
\Return{Approximation of the empirical Wasserstein barycenter $\widehat\mu^*_S$}
\end{algorithmic}
\end{algorithm}

\subsection{Simulations}
\begin{figure}
    \centering
    \begin{subfigure}{0.48\textwidth}
            \includegraphics[width=\textwidth]{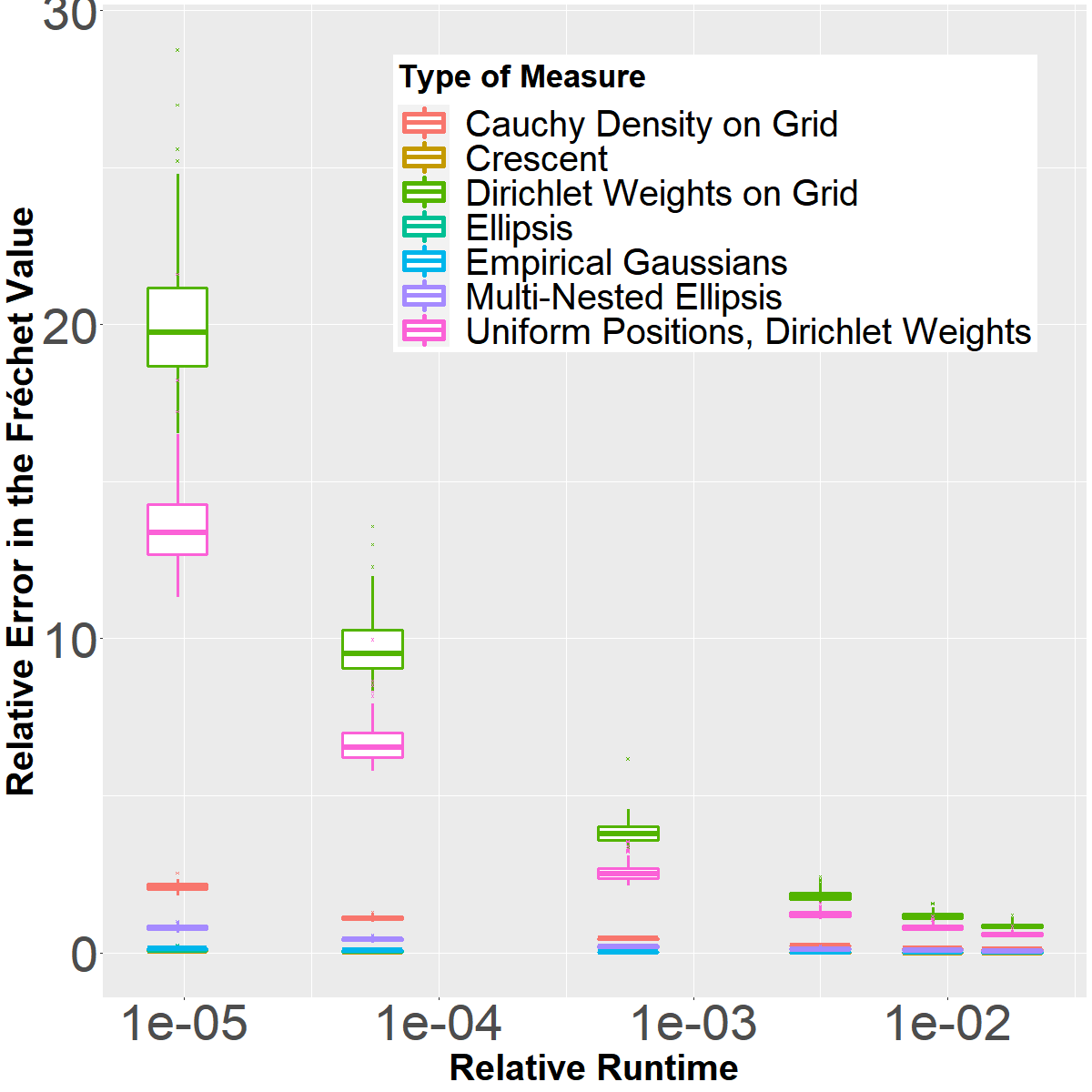}
    \end{subfigure}
        \begin{subfigure}{0.48\textwidth}
    \includegraphics[width=\textwidth]{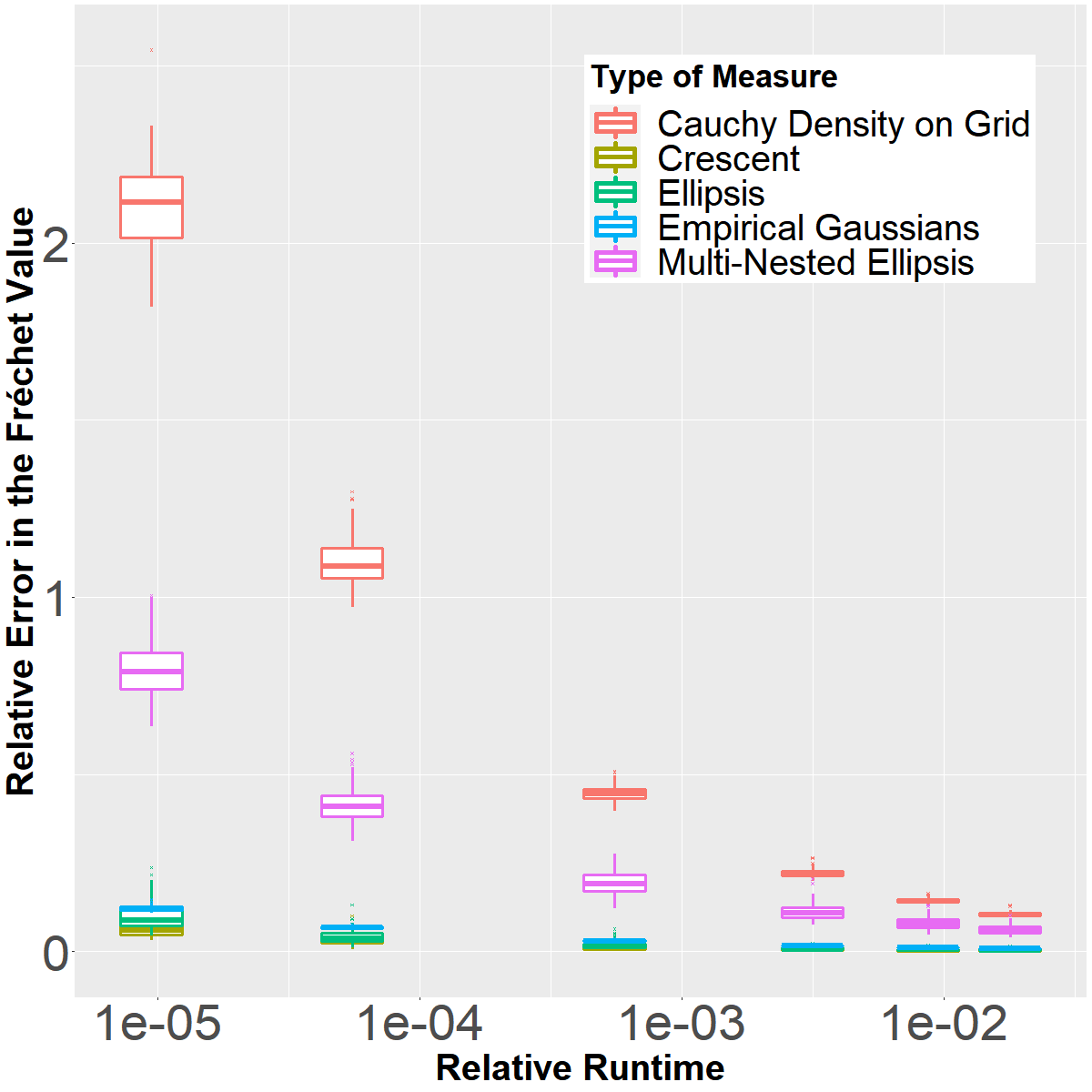}
    \end{subfigure}
    \caption{Simulated expected error of the Fr\'echet functional of $N=10$ measures with $M=4096$ points in $\mathbb{R}^2$ based on $100$ repeated runs for each value of $S$ compared to their relative runtime compared to the original dataset. On the right panel the two outlier distributions have been removed.}
        \label{sdep}
\end{figure}
\begin{figure}
    \centering
    \includegraphics[width=\textwidth]{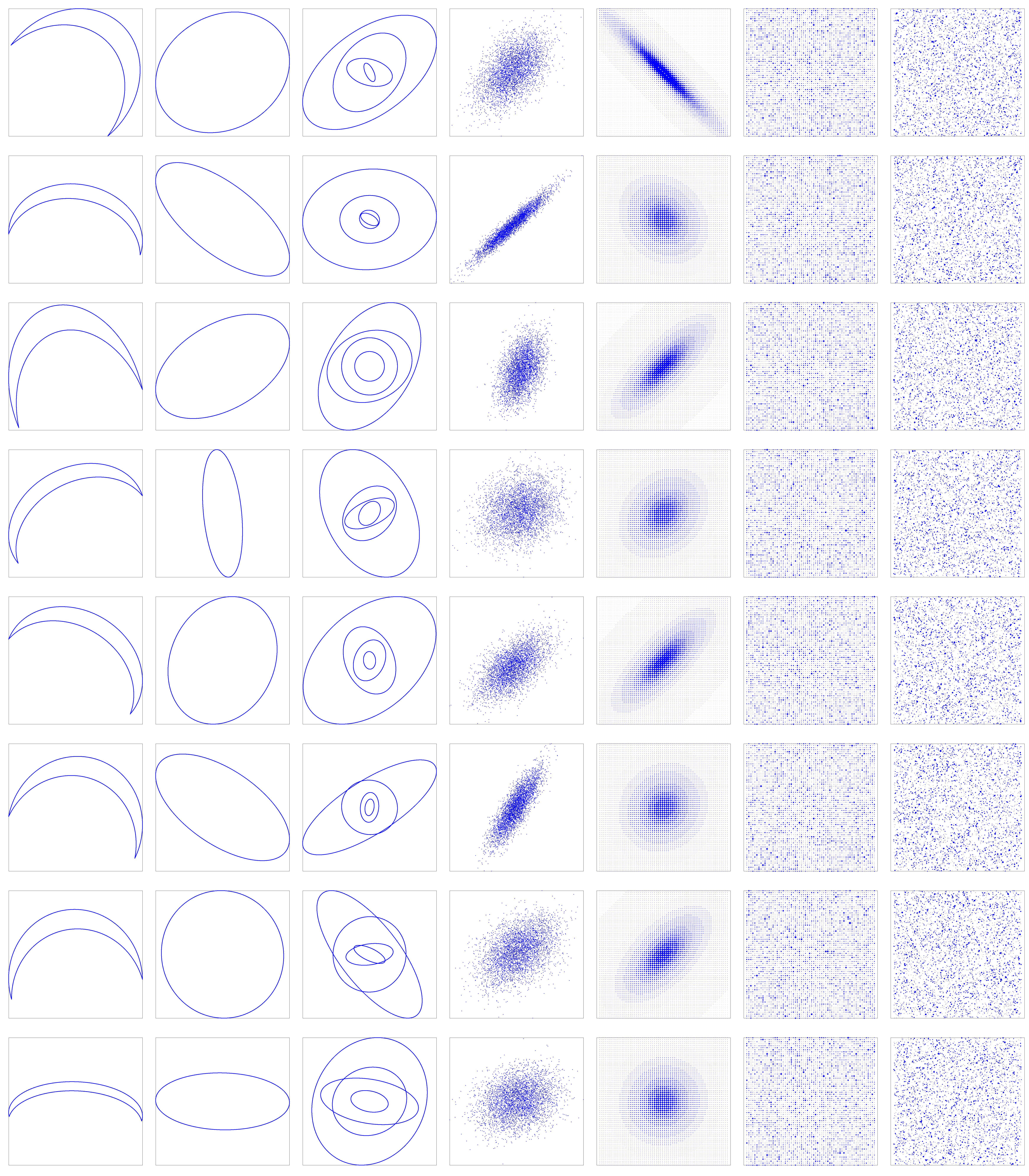}
    \caption{1.Column: Discretised crescents with uniform weights, 2.Column: Discretised ellipses with uniform weights, 3.Column: Discretised nested ellipses with uniform weights, 4.Column: Gaussian positions with uniform weights, 5.Column: Discretised Cauchy density on a grid, 6.Column: Dirichlet weights on a grid, 7.Column: Dirichlet weights on uniform positions. In each column the measures above the line are excerpts from the used dataset of $N=100$ measures with $M=4096=64^2$ points.}
\label{fig:2ddata}
\end{figure}
\begin{figure}
    \centering
    \includegraphics[width=\textwidth]{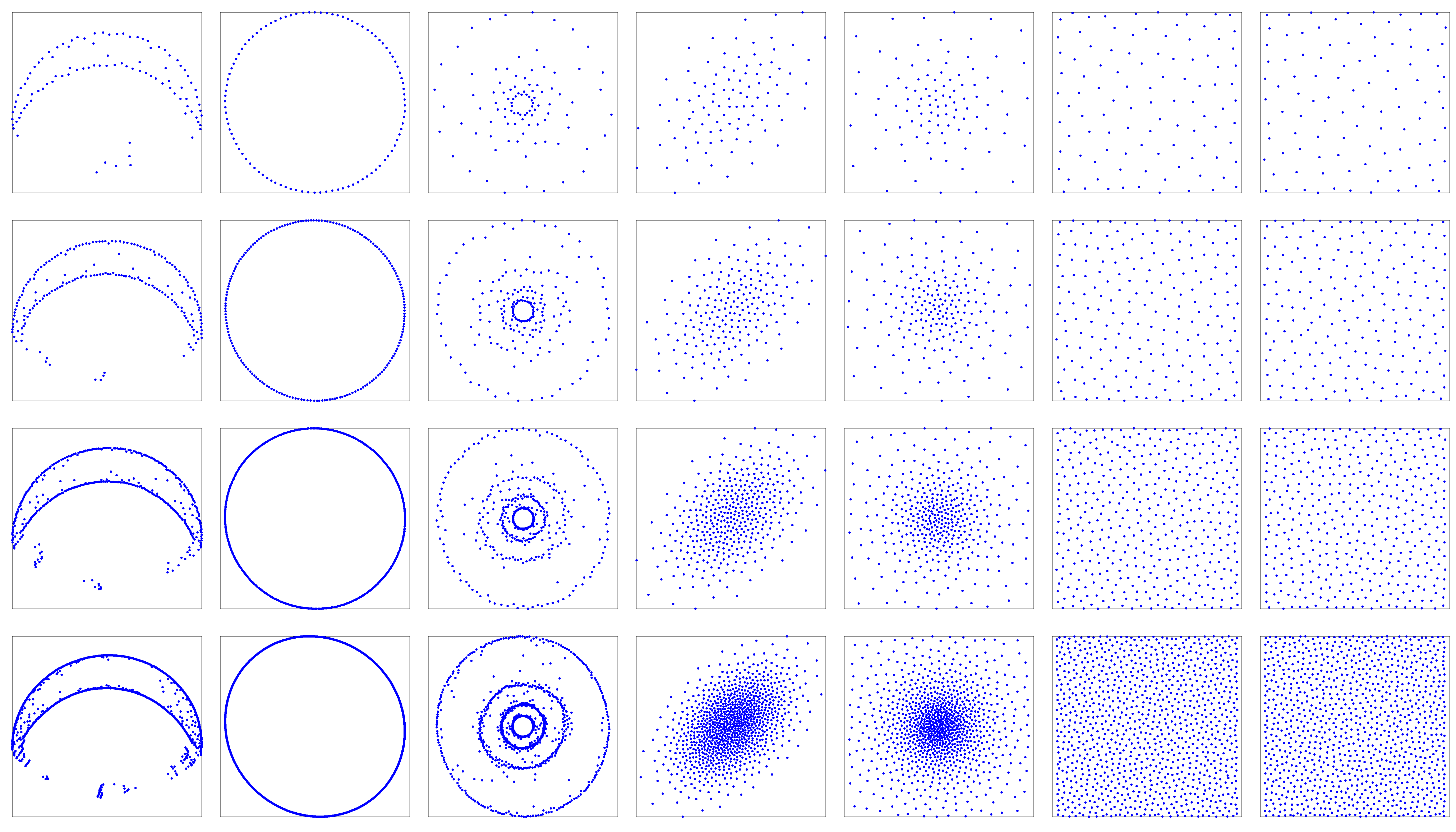}
  \caption{The randomised approximations of the barycenters of the data in Figure~\ref{fig:2ddata} obtained from the SUA-algorithm. The first row shows empirical barycenters based on $S=100$ samples, the second for $S=200$, third for $S=400$ and the fourth for $S=1000$. The computation of one barycenter in the last row took about 40 minutes on a single core of an i7-7700k.}
          \label{ImageBarys}
\end{figure}
\begin{figure}
    \centering
    \includegraphics[width=0.6\textwidth]{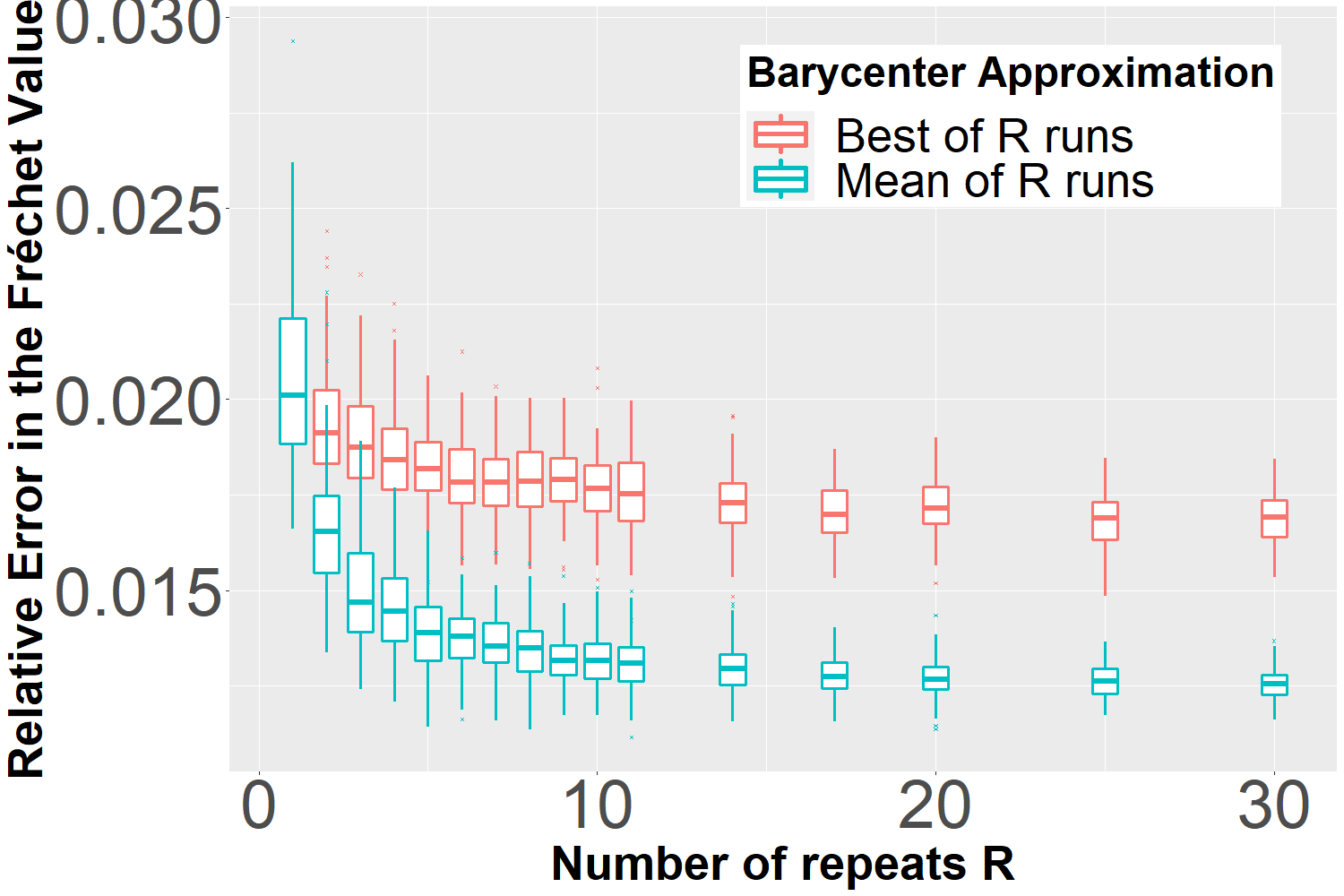}
    \caption{The dependence of the empirical barycenter approximation on the number of repeats $R$. Here, the data measures are $N=20$ multi-nested ellipses supported on $M=100$ support points each. The sample size $S$ was chosen to be $33$. For each $R$, $100$ runs have been performed.}
    \label{meanvsbestS33}
\end{figure}
In this subsection we present empirical results on the decay of the approximation error as the resample size $S$ grows, for some classes of measures on $\mathbb R^2$. {As we discuss the impact of the number of repeats $R$ separately in the following subsection, we set $R=1$ whilst analysing the behaviour with respect to $S$.}

The right panel of Figure~\ref{HS} shows the simulated, expected error as a function of $S$, for the datasets showcased in Figure~\ref{ImageBarys}. Noticeably, performance greatly varies between different types of datasets. For simple geometric structures such as ellipses, crescent shapes, or Gaussian samples, we obtain the best results; for $S$ of the order of $10\%$ of the original data size $M$ we obtain expected approximation errors of under $5\%$. For more complicated structures, such as the nested ellipses and the Cauchy density on a grid, we still obtain decent empirical errors. For $S\approx 0.1M$ we observe an error of under $20\%$ in the Fr\'echet functional, whilst the runtime is improved by a factor of about 300. On the unstructured data, such as Dirichlet weights on a grid/uniform support points, we observe the worst results. For $10\%$ support size we still observe errors upwards of $100\%$. As barycenters are mostly used in settings where the underlying measures have some form of geometric structure, these results indicate good potential performance of our stochastic barycenter approximation. We point out that the variance of the observed Fr\'echet values was generally quite small, as can be seen in Figure~\ref{sdep}, and decreases with increasing sample size. {This suggests that one might be best advised to increase the number of samples $S$ instead of increasing the number of repeats $R$ in Algorithm~\ref{bary_sub_alg}. In fact, this coincides with the observation by Sommerfeld et al.\ \cite{sommerfeld2019optimal} for the empirical Wasserstein distance itself as well as our own findings in the next subsection.}  It is also noteworthy that the curves at the right panel of Figure~\ref{HS} suggest that doubling the sample size approximatively halves the approximation error, indicating linear decay in $S$ instead of the rate $S^{-\frac12}$. This suggests that, whilst asymptotically we cannot perform better than this rate, we can achieve even better results for small sample sizes.

Since the convergence rate in Theorem~\ref{full_frechetbound} is independent of the dimension of the ambient space of the data, it is natural to extend our simulations from the plane to higher dimensions. In Figure~\ref{3d_bary} we provide some examples of datasets in $\mathbb R^3$.  The approximation indeed performs well, allowing to compute rather good approximate barycenters when $S$ is less than $5\%$ of the original size $M$. Whilst we were no longer able to approximate the barycenter of these measures directly due to memory constraints, extrapolating from smaller datasets suggests that the corresponding computation would take well over a week to complete. On the contrary, our stochastic approximation took around $90$ minutes on a single core of an i7-7700k and used about 800MB of RAM with our implementation.

\begin{figure}
    \centering
    \includegraphics[width=\textwidth]{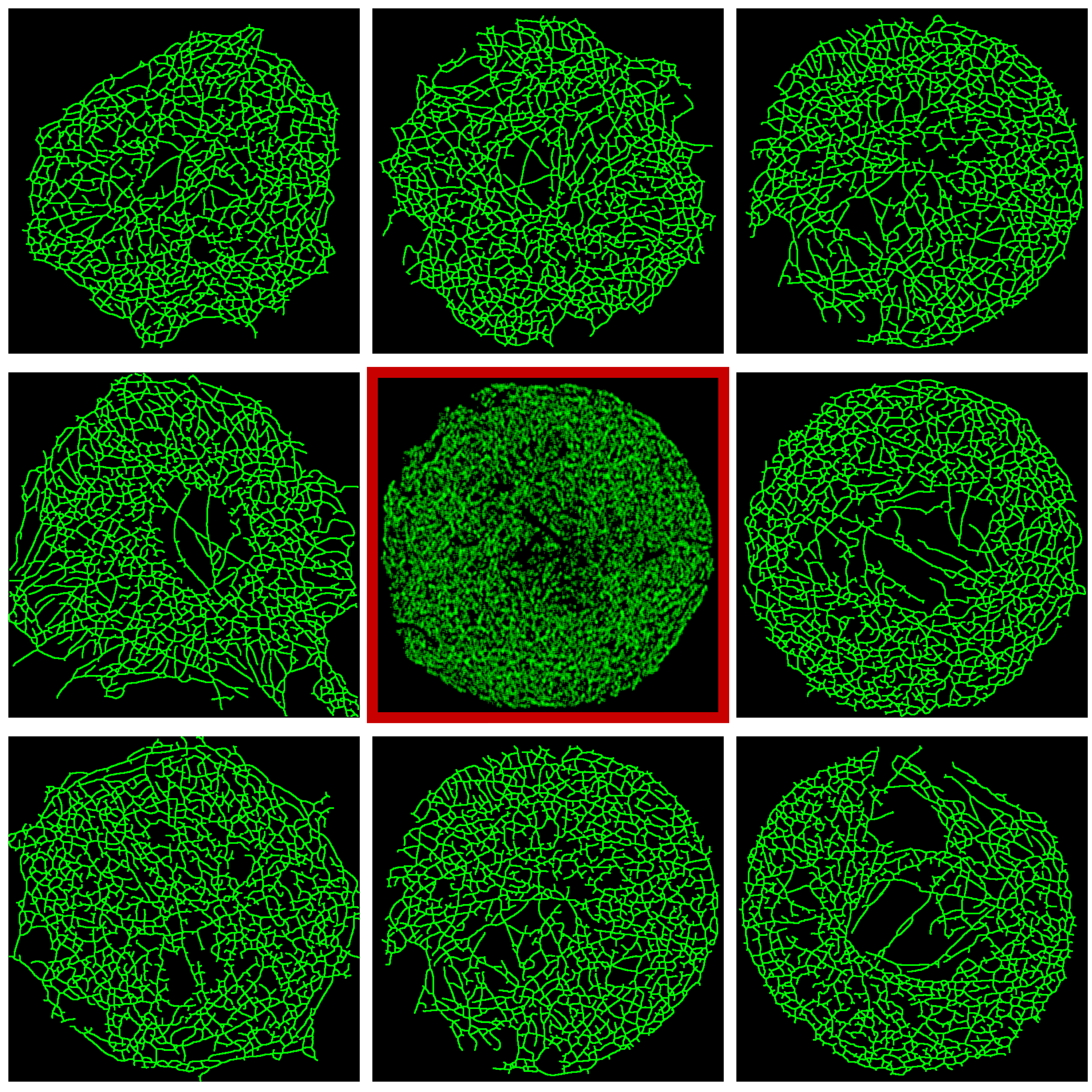}
    \caption{The eight outer confocal microscopy images show fluorescently labelled microtubules in mouse fibroblast cells, which had their graph structure extracted (by use of CytoSeg2.0 \cite{breuer2017system}). The images have a resolution of $642\times 642$. The center image (framed in red) shows their barycenter approximation obtained from the SUA-algorithm with $S=20,000$ samples, which has been mapped back onto a $642\times 642$ grid to produce an image.}
    \label{green_bary}
\end{figure}
\subsubsection{The effect of the number of repeats $R$}\label{subsec:R}
\begin{figure}
    \centering
    \includegraphics[width=\textwidth]{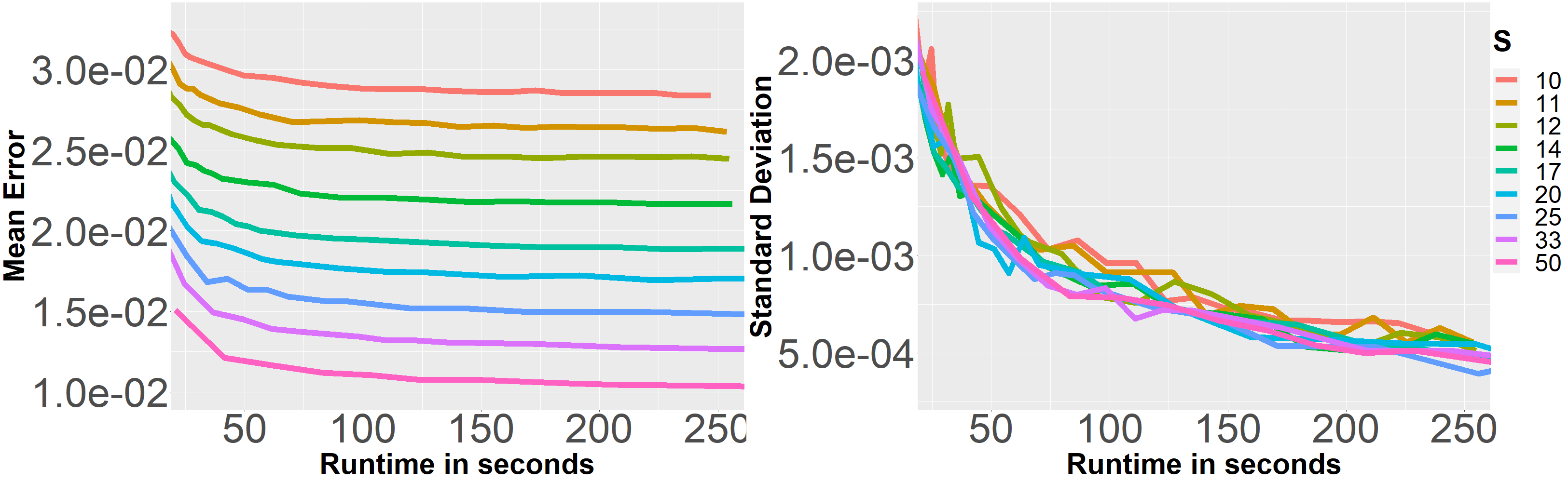}
    \caption{The dependence of the empirical barycenter approximation on the number of repeats $R$. Here, the data measures are $N=20$ multi-nested ellipses supported on $M=100$ support points each. For each pair of $R$ and $S$, $100$ runs have been performed.}
    \label{Rdep_plots}
\end{figure}
Apart from the choice of the sample size $S$, a discussion on the number of repeats $R$ is in order.  Two questions arise in this regard: firstly, how to choose $R$; and secondly, how to combine the $R$ empirical barycenter into a single estimator, when $R>1$.

We address the second matter first.  A natural choice is to take the empirical barycenter that achieves the smallest value of the Fr\'echet functional across the $R$ options.  This approach, however, is computationally demanding and might not even be feasible at all for large problems, since it requires solving optimal transport problems between the barycenter and the data measures.  For this reason, we advocate taking the linear average of the $R$ empirical barycenters, a simple procedure that does not involve optimisation.  In view of \eqref{eq:Fconvex}, it is guaranteed to have objective value that is better than the average value of the $R$ empirical barycenters.  Moreover, our simulations suggest that in fact the linear mean performs better than the best of the $R$ runs for essentially any $R$ and $S$. An example of this can be seen in Figure~\ref{meanvsbestS33}. Intuitively, this superior performance can be attributed to the fact that usually $S<<M$, thus the linear mean has the advantage of having a larger support size, which is closer to  the support size of the original barycenter.

Having discussed the choice of the estimator, we can now proceed to analyse the properties of this estimator with respect to $R$ empirically. To this end, we computed $100$ barycenters for a range of combinations of $R$ and $S$, and estimated the mean and the standard deviation of the estimation error. Naturally, increasing $R$ improves performance, in that both the mean and the variance of the approximation error decrease.  But this observation does not account for the extra computational effort resulting in increasing $R$. For a fixed runtime, increasing $R$ necessarily amounts to decreasing $S$. Figure~\ref{Rdep_plots} shows the empirically observed mean and standard deviation of the error as a function of the runtime, for different choices of $S$. Noticeably, for a fixed runtime the standard deviation of the error is nearly identical for all considered $S$, whilst the mean error clearly decreases. It should be stressed that the fact that the lines at the left plot of Figure~\ref{Rdep_plots} never intersect implies that, in this example, choosing $R=1$ is best for any fixed runtime considered in the simulation.  In other words, any extra available computational effort would be best spent by increasing $S$, and not $R$.  We also point out that if we consider both plots in Figure~\ref{Rdep_plots}, we see that, due to the bias-variance formula, it is also optimal in terms of the mean squared error to increases $S$ instead of $R$. This is, of course, only one example, but we believe that it provides a rather compelling argument for choosing $R$ small. For this reason, we use $R=1$ in the following real data example.

\subsection{A Cell Microscopy Example}
\label{sec::cells}
An important line of research in biophysics concerns the understanding of the structure of filament networks and cytoskeletons of proteins and their behaviour over time when placed under certain conditions (for an overview see e.g., Huber et al.\ \cite{huber2015cytoskeletal}). One can envisage using barycenters here in a number of ways.  One option is to consider the barycenter of a single filament at different time points, thus ``averaging out" variability across time.  Another possibility is to compute the barycenter of a number of different filaments at the same time point, thus averaging over the variability within certain types of proteins.  In that setting one could generate  genetically identical cells from each class of protein.  Although cells at each class are genetic twins, some variability may still arise from mutations.  This variability can be averaged out by taking a barycenter from each protein class, and by doing this at each time point, one can study the evolution of a typical filament (represented by the barycenters) across time.

A modern, high-resolution image usually has hundreds of thousands of pixels, and high quality images can easily extend this well into the millions. In our example, we consider $N=8$ images with resolution of $M=642\times 642> 400,000$, which can be seen in Figure~\ref{green_bary}. Computing the exact barycenter from these $8$ images is intractable.  One could, of course, decrease the resolution in order to obtain a problem of tractable size.  Doing so, however, would result in a substantial information loss, comparable to a decrease in the microscope's resolution, as it would blur the fine structure of the filaments. Instead, we compute an approximate barycenter using our method with $S=20,000$ (Figure~\ref{green_bary}, with red frame in the center).  Upon visual inspection this approximate barycenter seems to represent the 8 filaments well.

\section{Conclusion and Outlook}
Whilst the Wasserstein barycenter shows great potential in geometric data analysis, its tremendous computational cost is a severe hurdle for the OT-barycenter to become a widespread tool in analysing complex data.  Even using modern solvers (potentially with regularisation), one is quickly faced with massive memory and runtime demands which render efficient computations on personal computers impossible for realistic problem sizes.  This is the case for e.g.\ biological imaging, where there is great interest in obtaining sharp high-resolution images of the studied objects, as, for instance, filaments or other protein structures.  Since the number of discretisation points required to achieve a fixed quality grows exponentially with the underlying dimension, the need to alleviate the computation burden is even more paramount in three or higher dimensions in comparison to $\mathbb R^2$. Our SUA-algorithm might be able to bridge the gap between modern applications and the capabilities of state-of-the-art solvers, and may thus open the door to new and interesting data analysis in the natural sciences, where three-dimensional datasets are not uncommon. This is backed by the fact that our convergence rate is independent of the dimension combined with the strong visual performance of our resampled OT-barycenters in three dimensions with relatively low runtime.

Another attractive feature of our method is the possibility to adjust runtime and memory demand depending on the required accuracy of the result, which on the one hand allows to perform quick, rough analysis on personal computers, whilst on the other hand still enables more precise computations on specialised high-performance computing clusters. 

Finally, our simulations suggest even better performance than guaranteed by our theoretical bounds in many settings.  Whilst we have shown that our rates cannot be improved in general, it is still possible, and in fact suggested by simulations, that for certain classes of measures we can obtain better bounds on the expected error if $S$ is a sufficiently small fraction of $M$ (which is the relevant regime in practice). Providing such improved bounds would render our approach even more appealing, and is another clear avenue for further research.

\section*{Acknowledgments}
We wish to thank an associate editor and two reviewers for their constructive feedback.  We are grateful to Sarah K\"oster and Julia B\"orke for providing us with the data used in Section~\ref{sec::cells}. F.\  Heinemann gratefully acknowledges support from the DFG Research Training Group 2088 \textit{Discovering structure in complex data: Statistics meets Optimization and Inverse Problems}.  A.\ Munk gratefully acknowledges support from the DFG CRC 1456 \textit{Mathematics of the Experiment} and the Cluster of Excellence 2067 MBExC \textit{Multiscale bioimaging--from molecular machines to networks of excitable cells}. Y.\ Zemel was supported in part by Swiss National Science Foundation Grant 178220, and in part by a U.K.\ Engineering and Physical Sciences Research Council programme grant.
\begin{appendix}

\section{Auxiliary Proofs}\label{sec:app}
\begin{proof}[Proof of Theorem~\ref{emp_upper}]
The proof is very similar to that of Sommerfeld et al.\ \cite{sommerfeld2019optimal}; we only give the difference here.  The key idea is to improve the bound on the height function $h$ of \cite{sommerfeld2019optimal}.  Using the same notation as \cite{sommerfeld2019optimal}, note that for $1\leq l \leq l_{\max}$, we have
\begin{align*}
\frac{h^p(par(x))-h^p(x)}{\diam(\mathcal X)^p}=q^{-lp} \Bigg( \left( q+ \easysum{j=0}{l_{\max}-l}q^{-j} \right)^p -\left(\easysum{j=0}{l_{\max}-l}q^{-j} \right)^p \Bigg) 
\leq  q^{p-lp} \left( \frac{q}{q-1}\right)^p.
\end{align*}
If $x$ is a leaf, then $h(x)=0$ and the difference is $q^{-pl_{\max}}$ and if $x$ is a root, then $par(x)=x$ and the difference vanishes. Thus, we can start the sum in the definition of $\mathcal{E}$ at $1$ instead of $0$. Also, for $p=1$, we have $q^{l}(h(par(x))-h(x))=q\cdot \diam(\mathcal X)$, where the fractional factor has vanished. Plugging this into the upper bound for the tree metric yields
\begin{align*}
(W_p^{\mathcal{T}}(r,s))^p\leq 2^{p-1}\diam(\mathcal{X})^p \Bigg[ &\easysum{l=1}{l_{\max}} \left( \frac{q}{q-1} \right)^pq^pq^{-lp} \easysum{x \in \tilde{Q}_l}{}\lvert (S_Tr)_x-(S_Ts)_x \rvert \\&+q^{-l_{\max}p} \easysum{x \in \tilde{Q}_{l_{\max}+1}}{}\lvert (S_Tr)_x-(S_Ts)_x \rvert \Bigg].
\end{align*} 
Finally, taking expectations we obtain
\begin{align*}
 \mathbb{E}\left[ W_p^p(\mu,\mu^S) \right]\leq S^{-\frac{1}{2}}2^{p-1}\diam(\mathcal{X})^p q^p \Bigg[ &\easysum{l=1}{l_{\max}} \left( \frac{q}{q-1} \right)^pq^{-lp} \sqrt{\mathcal{N}(\mathcal{X},q^{-l}\diam(\mathcal{X}))} \\&+q^{-(l_{\max}+1)p}M^{\frac{1}{2}} \Bigg].
\end{align*}
For the specific case $p=1$, we have a better bound:
\begin{align*}
 \mathbb{E}\left[ W_1(\mu,\mu^S) \right]\leq S^{-\frac{1}{2}}\diam(\mathcal{X}) q \Bigg[ &\easysum{l=1}{l_{\max}} q^{-l} \sqrt{\mathcal{N}(\mathcal{X},q^{-l}\diam(\mathcal{X}))} +q^{-(l_{\max}+1)}M^{\frac{1}{2}}  \Bigg].
\end{align*}
Imitating the proof of \cite[Theorem 3]{sommerfeld2019optimal} with this improved bound yields the result for $\mathcal X\subset (\mathbb R^D,\|\cdot\|_2)$.  The details are a straightforward adaptation of the work in \cite[Theorem 3]{sommerfeld2019optimal}, and are therefore omitted.
\end{proof}

\begin{proof}[Proof of Lemma~\ref{poly_affine}]
Convexity follows from convexity of $P$, and we also have $|g(t)-g(s)|\le |t-s|\|y\|_1$ by the triangle inequality.  To prove piecewise affineness it suffices to show that there exists $t_0>0$ such that $g$ is affine on $[0,t_0]$.  One can then replace $x$ by $x+t_0y$ and repeat the argument.  Replacing  $y$ by $-y$ yields the result for negative values of $t$.

The minimum in $d_1(x+ty,P)$ is attained by a compactness argument.  Let $z_t\in P$ be such that $g(t)=\|x+ty-z_t\|_1$.

\textbf{Claim:} It is possible to choose $z_t$ and $z_0$ in such a way that $z_t\to z_0$ as $t\to0$.

We will show the stronger statement that as $t\searrow0$
\[
\sup_{z_t\in \argmin d_1(x+ty,P)}\inf_{z_0\in \argmin d_1(x,P)}\|z_t-z_0\|_1\to0. 
\]
By continuity for $t>0$ there exist $z_t\in \argmin \ d_1(x+ty,P)$ and $x_t\in \argmin \ d_1(x,P)$ which attain the supremum and the infimum. Assume that the converse of the claim is true.  Then for some sequence $t_k\to0$, we can choose $z_{t_k}\in \argmin d_1(x+t_ky,P)$ such that for all $k$ and all $z\in\argmin d_1(x,P)$ we have $\|z_{t_k}-z\|_1>\epsilon>0$.  By compactness of $P$ there is a further subsequence $k_l$ such that $z_{t_{k_l}}\to z$.  Since
\[
\|z-x\|_1 = \lim_{l\to\infty} \|z_{t_{k_l}} - x-t_{k_l}y\|_1 = \lim_{l\to\infty} d_1(x+t_{k_l}y,P) = d_1(x,P)
\]
we have $z\in\argmin d_1(x,P)$ and therefore
\[
0<\epsilon\le \liminf_{l\to\infty} \|z_{t_{k_l}} - z\|_1 =0,
\]
a contradiction.  This proves the claim.

Since $P$ is a polyhedron, for all $z_0\in P$ there exists $r_0(z_0)>0$ such that for all $v\in \R^L$ with $\|v\|_1\le r_0$,
\[
z_0+v \in P \quad \Longrightarrow z_0+2v\in P.
\]
By the previous claim, there exist $z_0\in \argmin d_1(x,P)$, $z_t\in \argmin d_1(x+ty,P)$, and $T_0>0$ such that for all $t\in[0,T_0]$,
\[
\|x + ty - z_t\|_1 = d_1(x+tv,P),\quad \|x-z_0\|_1 = d_1(x,P),\qquad \|z_t-z_0\|_1\le r_0=r_0(z_0).
\]
Therefore $z_0+2(z_t-z_0)\in P$ for all $t\in[0,T_0]$ and
\[
d_1(x,P) + d_1(x+2ty,P)
\le \|x-z_0\|_1 + \|x+2ty-z_0-2(z_t-z_0)\|_1= \|x^0\|_1 + \|x^0 + 2u^t\|_1,
\]
where $x^0=x-z_0$ and $u^t=ty - (z_t-z_0)\to0$.  Define
\[
I_+=\{1\le i\le L:x^0_i>0\},\quad
I_-=\{1\le i\le L:x^0_i<0\},\quad 
I_0=\{1\le i\le L:x^0_i=0\}.
\]
For $t$ sufficiently small (and such that $t\le T_0$) we have $2\|u^t\|_1\le \underset{i\notin I_0}{\min}|x_i^0|$ (where we define the minimum to be infinite if  $x= z_0$) and therefore
\begin{align*}
d_1(x,P) + d_1(x+2ty,P)
&\le \sum_{i\in I_+} (x^0_i + x^0_i + 2u^t_i)
+ \sum_{i\in I_-} (-x^0_i - x^0_i - 2u^t_i)
+ \sum_{i\in I_0} 2|u^t_i|\\
&= 2\sum_{i\in I_+} (x^0_i + u^t_i)
+ 2\sum_{i\in I_-} (-x^0_i -u^t_i)
+ 2\sum_{i\in I_0} |u^t_i|
\\&=2d_1(x+ty,P).
\end{align*}
Thus, there exists $0<t_0\le T_0$ such that for all $t\in [0,t_0]$,
\[
g(0) + g(2t)
\le 2g(t).
\]
Since $g$ is convex and finite on $\R$, this implies that $g$ is affine on $[0,t_0]$.  As stated at the beginning of the proof, this shows that $g$ is piecewise affine on the real line.
\end{proof}

\begin{proof}[Proof of Lemma~\ref{slope_bound}]
If $\mu$ is optimal, then we can choose $\mu^*=\mu$ and there is nothing to prove.  Hence we may assume that $\mu$ is not optimal.  Let $P\subset \mathbb{R}^L$ be the feasible polytope corresponding to \eqref{eq:linProgBary}. 
To each $\mu \in \mathcal{P}(\mathcal{\mathcal{C}})$ corresponds a $\pi \in P$ such that $c^T\pi=F(\mu)$. 
Fix an element $\pi^* \in \argmin d_1(\pi,\mathcal{M})$, from which we can construct a minimiser $\mu^*$ of $F$. It holds that
\[
F^p(\mu)-F^p(\mu^*)
=c^T \pi - c^T \pi^* =\|\pi-\pi^*\|_1 \frac{c^T \pi - f^*}{\|\pi-\pi^*\|_1}\ge \|\pi-\pi^*\|_1\psi(\pi),
\]
where
\[
\psi (\pi)= \frac{c^T\pi-f^*}{d_1(\pi,\mathcal M)},
\qquad \pi\in P\setminus \mathcal M.
\]
Next, we aim to show that the infimum of $\psi$ is attained at a vertex of $P$. 
If $\pi$ is not a vertex of $P$, then there exists a vector $v\in \mathbb{R}^L\backslash \{0\}$ such that $\pi +tv \in P$ for all $t \in [-1,1]$. Let 
\begin{align*}
\phi(t):=\psi(\pi+tv)=
\frac{c^T \pi-f^*+tc^T v}{d_1(\pi+tv,\mathcal M)}:=\frac{at+b}{g(t)}
,\qquad b=c^T\pi - f^*,
\end{align*}
where the nominator is affine by definition and the denominator, $g(t)$, is piecewise affine by Lemma~\ref{poly_affine}. Since $g$ is continuous and convex, we have $g(t)=\alpha_+t+\beta$ for small $t\ge0$ and $g(t)=\alpha_-t+\beta$ for small $t\le 0$, with $\alpha_-\le\alpha_+$ and $\beta>0$.  Taking the derivative we obtain
\begin{align*}
\phi^\prime_\pm (t)=\frac{a(\alpha_\pm t+\beta)-\alpha_\pm (at+b)}{(\alpha_\pm t+\beta )^2}=\frac{a\beta-b\alpha_\pm}{(\alpha_\pm t+\beta)^2},
\end{align*}
where $\phi'_+$ and $\phi'_-$ denote the right and left derivatives. We now distinguish three cases. If $\phi^\prime_+ (0)<0$, then $\psi(\pi+tv)<\psi(\pi)$ for $t>0$ small enough. If $\phi^\prime_- (0)>0$, then $\psi(\pi+tv)>\psi(\pi)$ for $t<0$ small enough. In both cases $\pi$ is not a minimiser.  Since $\phi'_+(0)\le \phi'_-(0)$, the only other possibility is that both derivatives vanish. Replacing $v$ by $-v$ if necessary, we may assume that $a\ge0$. We also have $b>0$ and $\beta >0$, since $\pi  \in P\setminus\mathcal{M}$ by assumptions. Thus it holds $\alpha_\pm=\alpha=a\beta/b\ge0$.

We can now move in direction $v$ (i.e., increasing $t$) until one of two things happens. Either we can no longer move in direction $v$ without leaving $P$, or the denominator of $\phi$ has changed since we moved into a different affine segment of $g$ ($t$ cannot go to infinity since $P\subseteq[0,1]^L$ is bounded by definition, and we cannot reach $\mathcal M$ because $a\ge0$.). In the first case, we have reached a face of $P$ (if $\pi$ is in the interior of $P$) or a strict subface of one of the faces of $P$ containing $\pi$ (if $\pi$ is on the boundary of $P$) and can now search a new direction $v$ in this (sub)face. Since $P$ is a polyhedron, this can only happen finitely many times until we reach a vertex of $P$.  In the second case we reach a point $t_0>0$ at which the right derivative of $g$ is strictly larger than $\alpha$ (by convexity of $g$).  Thus $g(t_0)\ge g(0)>0$, the right derivative of $\phi$ becomes negative, and for $\epsilon$ small
\[
\psi(\pi+(\epsilon + t_0)v)
=\phi(\epsilon+t_0)
<\phi(t_0)
=\phi(0)
=\psi(\pi),
\]
and consequently $\pi$ is not a minimiser.

All in all, for any $\pi\in P\setminus \mathcal M$, either $\pi$ does not minimise $\psi$, or there exists a vertex $v$ of $P$ that is not in $\mathcal M$ and such that $\psi(v)\le \psi(\pi)$.  Hence
\begin{align*}
\underset{\pi \in P\backslash \mathcal{M}}{\inf} \psi(\pi)
= \underset{\pi \in V\backslash V^*}{\min}\ \psi (\pi) =\underset{v \in V\backslash V^*}{\min} \ \frac{c^Tv-f^*}{d_1(v,M_f)}=:\widetilde{C}_P,
\end{align*}
and therefore
\begin{align*}
F^p(\mu)-F^p(\mu^*)
=c^T (\pi - \pi^*) 
\ge  \psi(\pi) \lVert \pi- \pi^* \rVert_1
\ge  \widetilde{C}_P \lVert \pi- \pi^* \rVert_1.
\end{align*}
To relate $\|\pi-\pi^*\|_1$ to the Wasserstein distance recall that $\pi$ encodes a measure $\mu$ as well as the optimal transport plan $T^i$ between $\mu$ and each measure $\mu_i$.  Hence
\begin{align*}
\lVert \pi- \pi^* \rVert_1&\geq \easysum{i=1}{N} \lVert T^{(i,1)}-T^{(i,2)} \rVert_1 + \easysum{x \in \mathcal{C}}{} \lvert \mu(x)- \mu^*(x) \rvert \\ &\geq N  \easysum{x \in \mathcal{C}}{} \lvert \mu(x)- \mu^*(x) \rvert + \easysum{x \in \mathcal{C}}{} \lvert \mu(x)- \mu^*(x) \rvert = (N+1)\easysum{x \in \mathcal{C}}{} \lvert \mu(x)- \mu^*(x) \rvert,
\end{align*}
where $T^{(i,1)}$ is an optimal transport plan between $\mu_i$ and $\mu$ and $T^{(i,2)}$ is an optimal plan between $\mu_i$ and $\mu^*$. Thus,
\begin{align*}
F^p(\mu)-F^p(\mu^*)&\geq \widetilde{C}_P  (N+1)\easysum{x \in \mathcal{C}}{} \lvert \mu(x)- \mu^*(x) \rvert \\ &= 2 (N+1) \widetilde C_P \ \mathrm{TV}(\mu,\mu^*) \ge   2 C_P W_p^p(\mu,\mu^*),
\end{align*}
with $C_P=(N+1)\widetilde C_P\diam(\mathcal X)^{-p}$.  Finally, $\widetilde C_P>0$ because it is a minimum of finitely many positive numbers.  Positivity of $C_P$ follows.
\end{proof}

\end{appendix}

{\footnotesize  
\bibliographystyle{plain}
\bibliography{emp_bary_arxiv_revised}

\begin{thebibliography}{10}

\bibitem{aguehBarycentersWassersteinSpace2011}
Martial Agueh and Guillaume Carlier.
\newblock Barycenters in the {W}asserstein space.
\newblock {\em SIAM Journal of Mathematical Analysis}, 43:904--924, 2011.

\bibitem{agueh2017vers}
Martial Agueh and Guillaume Carlier.
\newblock Vers un th{\'e}or{\`e}me de la limite centrale dans l'espace de
  {W}asserstein?
\newblock {\em Comptes Rendus Math{\'e}matique}, 355(7):812--818, 2017.

\bibitem{ahidar2020convergence}
Adil Ahidar-Coutrix, Thibaut Le~Gouic, and Quentin Paris.
\newblock Convergence rates for empirical barycenters in metric spaces:
  curvature, convexity and extendable geodesics.
\newblock {\em Probability Theory and Related Fields}, 177(1):323--368, 2020.

\bibitem{altschuler2017near}
Jason Altschuler, Jonathan Niles-Weed, and Philippe Rigollet.
\newblock Near-linear time approximation algorithms for optimal transport via
  {S}inkhorn iteration.
\newblock In I.~Guyon, U.V. Luxburg, S.~Bengio, H.~Wallach, R.~Fergus,
  S.~Vishwanathan, and R.~Garnett, editors, {\em Advances in Neural Information
  Processing Systems}, pages 1964--1974, Red Hook, NY, 2017. Curran.

\bibitem{alvarez2016fixed}
Pedro~C {\'A}lvarez-Esteban, E~Del~Barrio, JA~Cuesta-Albertos, and
  C~Matr{\'a}n.
\newblock A fixed-point approach to barycenters in {W}asserstein space.
\newblock {\em Journal of Mathematical Analysis and Applications},
  441(2):744--762, 2016.

\bibitem{anderesDiscreteWassersteinBarycenters2016}
Ethan Anderes, Steffen Borgwardt, and Jacob Miller.
\newblock Discrete {W}asserstein barycenters: Optimal transport for discrete
  data.
\newblock {\em Mathematical Methods of Operations Research}, 84:389--409, 2015.

\bibitem{backhoff2018bayesian}
Julio Backhoff-Veraguas, Joaquin Fontbona, Gonzalo Rios, and Felipe Tobar.
\newblock Bayesian learning with {W}asserstein barycenters.
\newblock {\em arXiv:1805.10833}, 2018.

\bibitem{benamou2015iterative}
Jean-David Benamou, Guillaume Carlier, Marco Cuturi, Luca Nenna, and Gabriel
  Peyr{\'e}.
\newblock Iterative {B}regman projections for regularized transportation
  problems.
\newblock {\em SIAM Journal on Scientific Computing}, 37(2):A1111--A1138, 2015.

\bibitem{berend2013sharp}
Daniel Berend and Aryeh Kontorovich.
\newblock A sharp estimate of the binomial mean absolute deviation with
  applications.
\newblock {\em Statistics and Probability Letters}, 83(4):1254--1259, 2013.

\bibitem{bigot2018characterization}
J{\'e}r{\'e}mie Bigot and Thierry Klein.
\newblock Characterization of barycenters in the {W}asserstein space by
  averaging optimal transport maps.
\newblock {\em ESAIM: Probability and Statistics}, 22:35--57, 2018.

\bibitem{billera2001geometry}
Louis~J Billera, Susan~P Holmes, and Karen Vogtmann.
\newblock Geometry of the space of phylogenetic trees.
\newblock {\em Advances in Applied Mathematics}, 27(4):733--767, 2001.

\bibitem{boissard2014mean}
Emmanuel Boissard and Thibaut Le~Gouic.
\newblock On the mean speed of convergence of empirical and occupation measures
  in {W}asserstein distance.
\newblock {\em Annales de l'IHP Probabilit{\'e}s et statistiques},
  50(2):539--563, 2014.

\bibitem{bonneel2016wasserstein}
Nicolas Bonneel, Gabriel Peyr{\'e}, and Marco Cuturi.
\newblock Wasserstein barycentric coordinates: histogram regression using
  optimal transport.
\newblock {\em ACM Transactions on Graphics}, 35(4):71--1, 2016.

\bibitem{borgwardt2019computational}
Steffen Borgwardt and Stephan Patterson.
\newblock On the computational complexity of finding a sparse {W}asserstein
  barycenter.
\newblock {\em arXiv:1910.07568}, 2019.

\bibitem{borgwardt2020improved}
Steffen Borgwardt and Stephan Patterson.
\newblock Improved linear programs for discrete barycenters.
\newblock {\em Informs Journal on Optimization}, 2(1):14--33, 2020.

\bibitem{breuer2017system}
David Breuer, Jacqueline Nowak, Alexander Ivakov, Marc Somssich, Staffan
  Persson, and Zoran Nikoloski.
\newblock System-wide organization of actin cytoskeleton determines organelle
  transport in hypocotyl plant cells.
\newblock {\em Proceedings of the National Academy of Sciences},
  114(28):E5741--E5749, 2017.

\bibitem{bronstein2017geometric}
Michael~M Bronstein, Joan Bruna, Yann LeCun, Arthur Szlam, and Pierre
  Vandergheynst.
\newblock Geometric deep learning: going beyond {E}uclidean data.
\newblock {\em IEEE Signal Processing Magazine}, 34(4):18--42, 2017.

\bibitem{chevallier2018uniform}
Julien Chevallier.
\newblock Uniform decomposition of probability measures: quantization,
  clustering and rate of convergence.
\newblock {\em Journal of Applied Probability}, 55(4):1037--1045, 2018.

\bibitem{chewi2020gradient}
Sinho Chewi, Tyler Maunu, Philippe Rigollet, and Austin~J Stromme.
\newblock Gradient descent algorithms for {Bures--Wasserstein} barycenters.
\newblock {\em arXiv:2001.01700}, 2020.

\bibitem{cuturi2013sinkhorn}
Marco Cuturi.
\newblock Sinkhorn distances: Lightspeed computation of optimal transport.
\newblock In C.~J.~C. Burges, L.~Bottou, M.~Welling, Z.~Ghahramani, and K.Q.
  Weinberger, editors, {\em Advances in Neural Information Processing Systems},
  pages 2292--2300. Curran, Red Hook, NY, 2013.

\bibitem{cuturiFastComputationWasserstein2014}
Marco Cuturi and Arnaud Doucet.
\newblock Fast computation of {W}asserstein barycenters.
\newblock In Eric~P. Xing and Tony Jebara, editors, {\em Proceedings of the
  31st International Conference on Machine Learning}, pages 685--693. PMLR,
  Beijing, 2014.

\bibitem{del1999central}
Eustasio del Barrio, Evarist Gin{\'e}, and Carlos Matr{\'a}n.
\newblock Central limit theorems for the {W}asserstein distance between the
  empirical and the true distributions.
\newblock {\em Annals of Probability}, pages 1009--1071, 1999.

\bibitem{dereich2013constructive}
Steffen Dereich, Michael Scheutzow, and Reik Schottstedt.
\newblock Constructive quantization: Approximation by empirical measures.
\newblock {\em Annales de l'Institut Henri Poincar\'e: Probabilit{\'e}s et
  Statistiques}, 49(4):1183--1203, 2013.

\bibitem{dongarra2003linpack}
Jack~J Dongarra, Piotr Luszczek, and Antoine Petitet.
\newblock The linpack benchmark: past, present and future.
\newblock {\em Concurrency and Computation: practice and experience},
  15(9):803--820, 2003.

\bibitem{dryden2009non}
Ian~L Dryden, Alexey Koloydenko, and Diwei Zhou.
\newblock Non-{E}uclidean statistics for covariance matrices, with applications
  to diffusion tensor imaging.
\newblock {\em The Annals of Applied Statistics}, 3(3):1102--1123, 2009.

\bibitem{dryden2021object}
Ian~L Dryden and James~S Marron.
\newblock {\em Object Oriented Data Analysis}.
\newblock forthcoming.

\bibitem{dudley1969speed}
Richard~Mansfield Dudley.
\newblock The speed of mean {G}livenko--{C}antelli convergence.
\newblock {\em The Annals of Mathematical Statistics}, 40(1):40--50, 1969.

\bibitem{dvinskikh2020sa}
Darina Dvinskikh.
\newblock {SA} vs {SAA} for population {W}asserstein barycenter calculation.
\newblock {\em arXiv:2001.07697}, 2020.

\bibitem{dvurechenskii2018decentralize}
Pavel Dvurechenskii, Darina Dvinskikh, Alexander Gasnikov, Cesar Uribe, and
  Angelia Nedich.
\newblock Decentralize and randomize: Faster algorithm for {W}asserstein
  barycenters.
\newblock {\em Advances in Neural Information Processing Systems},
  31:10760--10770, 2018.

\bibitem{dvurechensky2018computational}
Pavel Dvurechensky, Alexander Gasnikov, and Alexey Kroshnin.
\newblock Computational optimal transport: Complexity by accelerated gradient
  descent is better than by {S}inkhorn's algorithm.
\newblock In Jennifer~G. Dy and Andreas Krause, editors, {\em Proceedings of
  the 35th International Conference on Machine Learning, Stockholm, Sweden},
  pages 1366--1375, 2018.

\bibitem{evans2012phylogenetic}
Steven~N Evans and Frederick~A Matsen.
\newblock The phylogenetic {K}antorovich--{R}ubinstein metric for environmental
  sequence samples.
\newblock {\em Journal of the Royal Statistical Society: Series B (Statistical
  Methodology)}, 74(3):569--592, 2012.

\bibitem{fournier2015rate}
Nicolas Fournier and Arnaud Guillin.
\newblock On the rate of convergence in {W}asserstein distance of the empirical
  measure.
\newblock {\em Probability Theory and Related Fields}, 162(3-4):707--738, 2015.

\bibitem{frechet1948elements}
Maurice Fr{\'e}chet.
\newblock Les {\'e}l{\'e}ments al{\'e}atoires de nature quelconque dans un
  espace distanci{\'e}.
\newblock {\em Annales de l'Institut Henri Poincar{\'e}}, 10(4):215--310, 1948.

\bibitem{ge2019interior}
Dongdong Ge, Haoyue Wang, Zikai Xiong, and Yinyu Ye.
\newblock Interior-point methods strike back: solving the {W}asserstein
  barycenter problem.
\newblock In {\em Advances in Neural Information Processing Systems}, pages
  6894--6905, 2019.

\bibitem{genovese2012minimax}
Christopher Genovese, Marco Perone-Pacifico, Isabella Verdinelli, and Larry
  Wasserman.
\newblock Minimax manifold estimation.
\newblock {\em Journal of Machine Learning Research}, 13(43):1263--1291, 2012.

\bibitem{gerber2017multiscale}
Samuel Gerber and Mauro Maggioni.
\newblock Multiscale strategies for computing optimal transport.
\newblock {\em Journal of Machine Learning Research}, 18:1--32, 2017.

\bibitem{gouic2019fast}
Thibaut~Le Gouic, Quentin Paris, Philippe Rigollet, and Austin~J Stromme.
\newblock Fast convergence of empirical barycenters in {A}lexandrov spaces and
  the {W}asserstein space.
\newblock {\em arXiv:1908.00828}, 2019.

\bibitem{graf2007foundations}
Siegfried Graf and Harald Luschgy.
\newblock {\em Foundations of Quantization for Probability Distributions}.
\newblock Springer, Berlin, 2007.

\bibitem{gramfort2015fast}
Alexandre Gramfort, Gabriel Peyr{\'e}, and Marco Cuturi.
\newblock Fast optimal transport averaging of neuroimaging data.
\newblock In {\em International Conference on Information Processing in Medical
  Imaging}, pages 261--272. Springer, 2015.

\bibitem{huber2015cytoskeletal}
Florian Huber, Adeline Boire, Magdalena~Preciado L{\'o}pez, and Gijsje~H
  Koenderink.
\newblock Cytoskeletal crosstalk: when three different personalities team up.
\newblock {\em Current Opinion in Cell Biology}, 32:39--47, 2015.

\bibitem{huckemann2010intrinsic}
Stephan Huckemann, Thomas Hotz, and Axel Munk.
\newblock Intrinsic shape analysis: Geodesic {PCA} for {R}iemannian manifolds
  modulo isometric {L}ie group actions.
\newblock {\em Statistica Sinica}, pages 1--58, 2010.

\bibitem{huckemann2021data}
Stephan~F Huckemann and Benjamin Eltzner.
\newblock Data analysis on nonstandard spaces.
\newblock {\em Wiley Interdisciplinary Reviews: Computational Statistics},
  13(3):e1526, 2021.

\bibitem{kantorovich1942translocation}
Leonid~Vitalievich Kantorovich.
\newblock On the translocation of masses.
\newblock In {\em Dokl. Akad. Nauk. USSR (NS)}, volume~37, pages 199--201,
  1942.

\bibitem{klatt2020empirical}
Marcel Klatt, Carla Tameling, and Axel Munk.
\newblock Empirical regularized optimal transport: Statistical theory and
  applications.
\newblock {\em SIAM Journal on Mathematics of Data Science}, 2(2):419--443,
  2020.

\bibitem{kloeckner2015geometric}
Beno{\^\i}t~R Kloeckner.
\newblock A geometric study of {W}asserstein spaces: ultrametrics.
\newblock {\em Mathematika}, 61(1):162--178, 2015.

\bibitem{kroshnin2019statistical}
Alexey Kroshnin, Vladimir Spokoiny, and Alexandra Suvorikova.
\newblock Statistical inference for {Bures--Wasserstein} barycenters.
\newblock {\em Annals of Applied Probability}, in press.

\bibitem{le2017existence}
Thibaut Le~Gouic and Jean-Michel Loubes.
\newblock Existence and consistency of {W}asserstein barycenters.
\newblock {\em Probability Theory and Related Fields}, 168(3-4):901--917, 2017.

\bibitem{lee2014path}
Yin~Tat Lee and Aaron Sidford.
\newblock Path finding methods for linear programming: Solving linear programs
  in {$O$}(vrank) iterations and faster algorithms for maximum flow.
\newblock In {\em 2014 IEEE 55th Annual Symposium on Foundations of Computer
  Science}, pages 424--433. IEEE, 2014.

\bibitem{lei2020convergence}
Jing Lei.
\newblock Convergence and concentration of empirical measures under
  {W}asserstein distance in unbounded functional spaces.
\newblock {\em Bernoulli}, 26(1):767--798, 2020.

\bibitem{li2020continuous}
Lingxiao Li, Aude Genevay, Mikhail Yurochkin, and Justin~M Solomon.
\newblock Continuous regularized {W}sserstein barycenters.
\newblock {\em Advances in Neural Information Processing Systems}, 33, 2020.

\bibitem{lin2020computational}
Tianyi Lin, Nhat Ho, Xi~Chen, Marco Cuturi, and Michael~I Jordan.
\newblock Computational hardness and fast algorithm for fixed-support
  {W}asserstein barycenter.
\newblock {\em NEURIPS, to appear}, 2020.

\bibitem{merigot2011multiscale}
Quentin M{\'e}rigot.
\newblock A multiscale approach to optimal transport.
\newblock {\em Computer Graphics Forum}, 30(5):1583--1592, 2011.

\bibitem{oberman2015efficient}
Adam~M Oberman and Yuanlong Ruan.
\newblock An efficient linear programming method for optimal transportation.
\newblock {\em arXiv preprint arXiv:1509.03668}, 2015.

\bibitem{panaretos2016amplitude}
Victor~M Panaretos and Yoav Zemel.
\newblock Amplitude and phase variation of point processes.
\newblock {\em The Annals of Statistics}, 44(2):771--812, 2016.

\bibitem{pass2013optimal}
Brendan Pass.
\newblock Optimal transportation with infinitely many marginals.
\newblock {\em Journal of Functional Analysis}, 264(4):947--963, 2013.

\bibitem{patrangenaru2015nonparametric}
Victor Patrangenaru and Leif Ellingson.
\newblock {\em Nonparametric Statistics on Manifolds and their Applications to
  Object Data Analysis}.
\newblock CRC Press, 2015.

\bibitem{peyre2019computational}
Gabriel Peyr{\'e} and Marco Cuturi.
\newblock Computational optimal transport: With applications to data science.
\newblock {\em Foundations and Trends{\textregistered} in Machine Learning},
  11(5-6):355--607, 2019.

\bibitem{rabin2011wasserstein}
Julien Rabin, Gabriel Peyr{\'e}, Julie Delon, and Marc Bernot.
\newblock Wasserstein barycenter and its application to texture mixing.
\newblock In {\em International Conference on Scale Space and Variational
  Methods in Computer Vision}, pages 435--446. Springer, 2011.

\bibitem{rubner2000earth}
Yossi Rubner, Carlo Tomasi, and Leonidas~J Guibas.
\newblock The earth mover's distance as a metric for image retrieval.
\newblock {\em International Journal of Computer Vision}, 40(2):99--121, 2000.

\bibitem{schmitz2018wasserstein}
Morgan~A Schmitz, Matthieu Heitz, Nicolas Bonneel, Fred Ngole, David
  Coeurjolly, Marco Cuturi, Gabriel Peyr{\'e}, and Jean-Luc Starck.
\newblock Wasserstein dictionary learning: Optimal transport-based unsupervised
  nonlinear dictionary learning.
\newblock {\em SIAM Journal on Imaging Sciences}, 11(1):643--678, 2018.

\bibitem{schmitzer2016sparse}
Bernhard Schmitzer.
\newblock A sparse multiscale algorithm for dense optimal transport.
\newblock {\em Journal of Mathematical Imaging and Vision}, 56(2):238--259,
  2016.

\bibitem{schmitzer2019stabilized}
Bernhard Schmitzer.
\newblock Stabilized sparse scaling algorithms for entropy regularized
  transport problems.
\newblock {\em SIAM Journal on Scientific Computing}, 41(3):A1443--A1481, 2019.

\bibitem{seguy2015principal}
Vivien Seguy and Marco Cuturi.
\newblock Principal geodesic analysis for probability measures under the
  optimal transport metric.
\newblock In {\em Advances in Neural Information Processing Systems}, pages
  3312--3320, 2015.

\bibitem{solomon2015convolutional}
Justin Solomon, Fernando De~Goes, Gabriel Peyr{\'e}, Marco Cuturi, Adrian
  Butscher, Andy Nguyen, Tao Du, and Leonidas Guibas.
\newblock Convolutional {W}asserstein distances: Efficient optimal
  transportation on geometric domains.
\newblock {\em ACM Transactions on Graphics (TOG)}, 34(4):1--11, 2015.

\bibitem{sommerfeld2018inference}
Max Sommerfeld and Axel Munk.
\newblock Inference for empirical {W}asserstein distances on finite spaces.
\newblock {\em Journal of the Royal Statistical Society: Series B (Statistical
  Methodology)}, 80(1):219--238, 2018.

\bibitem{sommerfeld2019optimal}
Max Sommerfeld, J{\"o}rn Schrieber, Yoav Zemel, and Axel Munk.
\newblock Optimal transport: Fast probabilistic approximation with exact
  solvers.
\newblock {\em Journal of Machine Learning Research}, 20(105):1--23, 2019.

\bibitem{tiapkin2020stochastic}
Daniil Tiapkin, Alexander Gasnikov, and Pavel Dvurechensky.
\newblock Stochastic saddle-point optimization for {W}asserstein barycenters.
\newblock {\em arXiv:2006.06763}, 2020.

\bibitem{vaserstein1969markov}
Leonid~Nisonovich Vaserstein.
\newblock Markov processes over denumerable products of spaces, describing
  large systems of automata.
\newblock {\em Problemy Peredachi Informatsii}, 5(3):64--72, 1969.

\bibitem{villaniOptimalTransportOld2009}
C{\'e}dric Villani.
\newblock {\em Optimal Transport: Old and New}, volume 338.
\newblock Springer, Berlin, 2008.

\bibitem{weed2019sharp}
Jonathan Weed and Francis Bach.
\newblock Sharp asymptotic and finite-sample rates of convergence of empirical
  measures in {W}asserstein distance.
\newblock {\em Bernoulli}, 25(4A):2620--2648, 2019.

\bibitem{xie2020fast}
Yujia Xie, Xiangfeng Wang, Ruijia Wang, and Hongyuan Zha.
\newblock A fast proximal point method for computing exact {W}asserstein
  distance.
\newblock In {\em Uncertainty in Artificial Intelligence}, pages 433--453.
  PMLR, 2020.

\bibitem{ye2017fast}
Jianbo Ye, Panruo Wu, James~Z Wang, and Jia Li.
\newblock Fast discrete distribution clustering using {W}asserstein barycenter
  with sparse support.
\newblock {\em IEEE Transactions on Signal Processing}, 65(9):2317--2332, 2017.

\bibitem{zemel2019frechet}
Yoav Zemel and Victor~M Panaretos.
\newblock Fr{\'e}chet means and {P}rocrustes analysis in {W}sserstein space.
\newblock {\em Bernoulli}, 25(2):932--976, 2019.

\bibitem{zolotarev1976metric}
Vladimir~Mikhailovich Zolotarev.
\newblock Metric distances in spaces of random variables and their
  distributions.
\newblock {\em Mathematics of the USSR-Sbornik}, 30(3):373--402, 1976.

\end{thebibliography}
\end{document}